\newtheorem{lemma}{Lemma}
\begin{document}
\IEEEoverridecommandlockouts

\title{Energy Harvesting Two-Hop Communication Networks}
\author{\IEEEauthorblockN{Oner Orhan and Elza Erkip\\}
\IEEEauthorblockA{NYU Polytechnic School of Engineering, Brooklyn, NY\\
Email: \text{onerorhan@nyu.edu}, \text{elza@nyu.edu}}}
\maketitle

\begin{abstract}
Energy harvesting multi-hop networks allow for perpetual operation of low cost, limited range wireless devices. Compared with their battery operated counterparts, the coupling of energy and data causality constraints with half duplex relay operation makes it challenging to operate such networks. In this paper, a throughput maximization problem for energy harvesting two-hop networks with decode-and-forward half-duplex relays is investigated. For a system with two parallel relays, various combinations of the following four transmission modes are considered: Broadcast from the source, multi-access from the relays, and successive relaying phases I and II. Optimal transmission policies for one and two parallel relays are studied under the assumption of non-causal knowledge of energy arrivals and finite size relay data buffers. The problem is formulated using a convex optimization framework, which allows for efficient numerical solutions and helps identify important properties of optimal policies. Numerical results are presented to provide throughput comparisons and to investigate the impact of multiple relays, size of relay data buffers,  transmission modes, and energy harvesting on the throughput.
\end{abstract}

\section{Introduction}\label{intro}
Energy harvesting presents a new paradigm for continuous operation of communication systems without the need for  battery replacement. Energy harvesting technology reduces the operational cost and allows off-grid deployment of sensor nodes such as the ones used within a human body, in nature, or on various structures. As a result, wireless nodes with energy harvesting capability are able to provide long-term data acquisition and monitoring of biological signals, environment and wildlife. An important issue in realizing energy harvesting networks is  the  stochastic nature of energy arrivals with low energy amounts. Therefore, the main concern in energy harvesting wireless sensor network design is the efficient use and management of the harvested energy.

Energy harvesting wireless sensor networks are typically operated over multiple hops to provide range extension and to lower power consumption which favors multiple short hops as opposed to one long hop. Operation over multiple hops brings in another challenge for efficient use of  harvested energies; now multiple nodes have to be coordinated to allow for energy and data causality over each hop,  necessitating the half-duplex relays to switch from reception to transmission modes as a function of the energy and buffer state of the whole network.  The main goal of this paper is to study this problem in the case of two-hop networks involving one or two parallel relays under the {\em offline} optimization framework, which allows for non-causal knowledge of energy arrivals at all the nodes; see \cite{jsac} and references therein for a detailed overview of offline energy harvesting communications systems. While assuming non-causal knowledge presents a simplified model, it allows us to uncover some of the important properties of optimal transmission policies, which determine when and how to use the relays optimally. The insights gained from our work can be used to move towards more practical solutions involving more hops and non-causal knowledge of energy arrivals as done in \cite{elza}.

\subsection{Contributions}
In this paper we investigate two-hop energy harvesting networks with half-duplex relay nodes that have limited size data buffers. We assume the relays employ {\em decode-and-forward} strategy, which is easy to implement in practice. Our goal is to maximize the total throughput delivered to the destination by a deadline. We first study the single relay case as shown in Figure \ref{system:subfig1}. Under the offline optimization framework, we formulate a convex optimization problem and using the Karush-Kuhn-Tucker (KKT) conditions provide properties of optimal transmission policy that determines source and relay schedules and energy levels.

We next consider a two-hop network with two parallel relays \cite{schein}, also known as the {\em diamond relay channel} as shown in Figure \ref{system:subfig2}. The capacity of the diamond relay channel is not known, and the highest achievable rates are based on various combinations of the following four transmission modes \cite{Sandhu}, \cite{Khandani}: i) Broadcast mode, in which the source ($S$) transmits and relays ($R_1$ and $R_2$) listen; ii) the multi-access mode,  in which $R_1$ and $R_2$ transmit and the destination ($D$) listens; iii) successive relaying phase I, in which $S$ and $R_2$ transmit, and $R_1$ and $D$ listen; iv) successive relaying phase II, in which $S$ and $R_1$ transmit, and $R_2$ and $D$ listen. We formulate a convex optimization problem that considers all  four transmission modes jointly. In order to get insights, we investigate some important special cases: i) Successive relaying  phases I and II, also known as \emph{multihop with spatial reuse}; ii) \emph{broadcast and multihop with spatial reuse}; iii) \emph{multi-access and multihop with spatial reuse}. Using the convex optimization framework, we show that optimal transmission policies for the parallel relay case exhibit some characteristics that are different their single relay counterparts. Finally, solving the optimization problems, we illustrate the effect of multiple relays and energy harvesting on the throughput. We also study the impact of the relay data buffer size on performance.

\subsection{Related Work}
In recent years, there has been a surge of interest in energy harvesting communication systems where a significant effort has been devoted to the offline optimization framework;  see \cite{jsac} and \cite{deniz3} for a review of the recent developments. Here we summarize the papers that are closely related to our work. Optimal transmission policies for energy harvesting two-hop networks have been studied in \cite{kaya3}-\cite{yaming}. In \cite{kaya3} two-way relay channels with energy harvesting nodes are considered. Gunduz and Devillers study offline throughput maximization for two-hop communication with a full-duplex relay and with a half-duplex relay for single energy arrival at the source and multiple energy arrivals at the relay in \cite{deniz}. Similarly, multiple energy arrivals at the source
and single energy arrival at the half-duplex relay is studied in \cite{Letaief}. Our previous works \cite{oner}-\cite{oner2} also focus on a half-duplex relay, and for two energy arrivals at the source and multiple energy arrivals at the relay, identify necessary properties of an optimal transmission policy using heuristic arguments. In \cite{oner3} we extend our work in \cite{oner}-\cite{oner2} to include a convex optimization formulation for the case of a single relay and two relays employing multi-hop with spatial reuse. We also provide properties of optimal transmission policies using KKT conditions. The impact of data buffer size for a battery operated relay and a relay with one energy arrival is studied in \cite{varan}. In addition, the throughput maximization problem with amplify and forward
relaying, and relay selection problem are studied in \cite{minasian} and \cite{yaming}, respectively, with non-causal and causal channel and energy arrival information. In \cite{Huang} Huang et. al. study the throughput maximization problem for  the energy harvesting Gaussian relay
channel and Yuyi et. al. in \cite{yuyi} investigate link-selection problem to minimize the average outage
probability. Gurakan et. al \cite{gurakan} consider energy harvesting multi-hop communication
with energy cooperation, where the source can transfer some of its harvested energy to the relay.
 Along this line of work, the throughput maximization problem for
two-hop energy harvesting network with energy transfer from the source to the relay, and
with two-way energy transfer from multiple source nodes are investigated in \cite{berk} and \cite{kaya2},
respectively.

Compared with our conference publications \cite{oner}-\cite{oner3}, this paper introduces a more comprehensive framework to study the parallel relay case by introducing all four transmission modes and by providing a detailed analysis of the optimal transmission policies. We also incorporate the data buffer size limitation at the relays. Furthermore,  the numerical results are extended to include comparisons of various combination of the transmission modes, and impact of number of relays and relay data buffer size on performance.

\begin{figure}
\centering
\subfigure[]{
\includegraphics[scale=0.6,trim= 0 -57 -30 0]{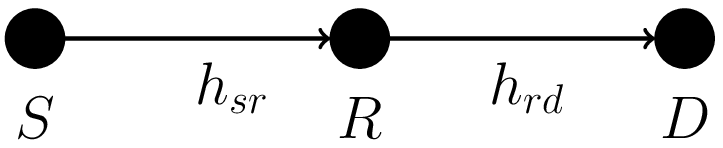}
\label{system:subfig1}
}
\subfigure[]{
\includegraphics[scale=0.6,trim= -30 0 0 0]{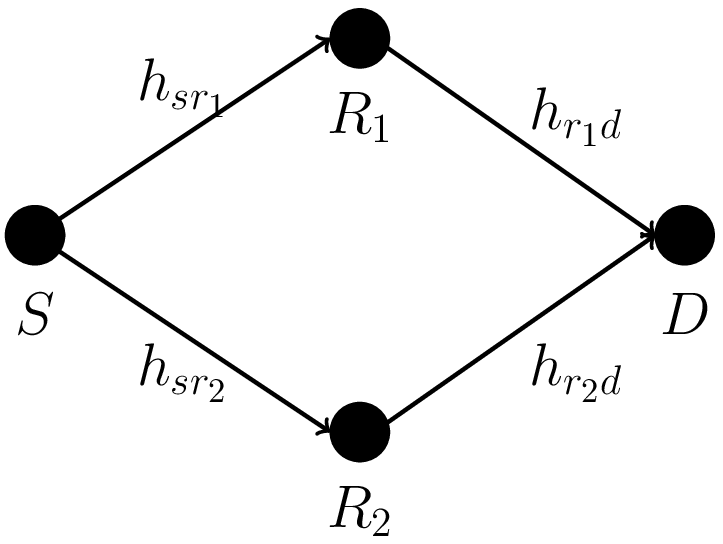}
\label{system:subfig2}
}
\caption{ Two-hop communication with (a) one relay, (b) two parallel relays (diamond relay channel). The power gain between nodes $k$ and $l$ is $\alpha_{kl}=|h_{kl}|^2$ where $h_{kl}$ is the complex channel gain, $k=s,r_1,r_2,r$ and $l=r_1,r_2,r,d$.}
\label{system fig}%
\end{figure}

\subsection{Organization of the Paper}
The paper is organized as follows. In the next section, we describe the system model and
achievable rates for one relay and two relay cases together with some general properties of optimal
transmission policies. In Section \ref{two hop}, we provide a convex formulation and investigate optimal
transmission policies for throughput maximization in the one relay case. We also provide some properties of optimal power
allocation. We formulate a convex problem for the case of two relays in Section \ref{two relay}. We
investigate optimal transmission policies for multi-hop with spatial reuse, broadcast and multihop
with spatial reuse, and multi-access and multi-hop with spatial reuse in Sections \ref{successive relaying}, \ref{BC and Successive relaying mode}, \ref{MAC and Successive relaying mode}, respectively. In Section \ref{numerical}, numerical results are presented, and Section \ref{conclude} concludes the paper.

\section{Preliminaries}
\subsection{System Model}\label{system}
We consider two-hop communication with energy harvesting source ($S$), and one ($R$) or two parallel ($R_1$ and $R_2$) energy harvesting half-duplex relays as in Figure \ref{system fig}. We assume that the relays have finite size data buffer with capacity $B_{max}$ bits. There is no direct link between the source and the destination, and the relays cannot hear one another as in \cite{Khandani}. Each link is modeled as having independent additive white Gaussian noise with unit variance. The complex channel gain between node $k$ and $l$ is $h_{kl}$ where $k=s,r_1,r_2, r$, and $l=r_1, r_2, r, d$, and remains constant throughout transmission. The corresponding power gains are $\alpha_{kl}= |h_{kl}|^2$. For the two relay case, without loss of generality we assume $\alpha_{sr_1} > \alpha_{sr_2}$. We assume that energy arrives at the source and relays with arbitrary and finite amounts at arbitrary times until a given deadline $T$ seconds. For ease of exposition, we combine all energy arrivals at the nodes in a single time series $t_0=0, \ldots, t_K < T$ by allowing zero energy arrivals at some time instants at which only one of the nodes harvests energy.
We denote harvested energy amounts at time $t_i$ by $E_{s,i}$, $E_{r_1,i}$, and $E_{r_2,i}$ for $S$, $R_1$ and $R_2$, respectively, ($E_{r,i}$ for one relay), $i=1,...,K$. In addition, we assume that each node has separate infinite size battery and harvested energies are stored in the batteries without any energy loss. We also assume that there is no energy loss in retrieving energy from the batteries. The time interval between two consecutive energy arrivals $t_{i-1}$ and $t_i$ is denoted by $\tau_i\triangleq t_i-t_{i-1}$, and it is called the $i$'th \textit{epoch.}

Our goal is to maximize the total data delivered to the destination by a given deadline $t=T$ which is referred to as \emph{the throughput maximization} problem \cite{deniz3}. We consider  \textit{offline optimal transmission policies}, that is, we identify optimal power allocation for each node and the transmission schedule assuming that all energy amounts and arrival times are known at the nodes before transmission starts. Here, the transmission schedule indicates which node transmits when, and it is necessary to coordinate the operation of the half-duplex relays. We assume that the nodes consume energy only for transmission.  Due to energy arrivals over time, any feasible transmission policy must satisfy \textit{energy causality} constraints. Energy causality constraints refers to the restriction on the total consumed energy of a node at time $t$ which should be less than or equal to the total harvested energy at that node by that time. In addition, there are \textit{data causality} and \textit{finite data buffer} constraints on the feasible transmission policy. The data causality constraint states that data transmitted by any of the relays up to time $t$ should not exceed total data received by that relay up to that time. The finite data buffer constraint suggests that each relay can store at most $B_{max}$ bits of data in its buffer. We assume both relays have the same size data buffer for simplicity; our results can easily be extended to the case when each relay buffer is of different size.

\begin{figure}
\centering
\subfigure[]{
\includegraphics[scale=0.6,trim= 0 0 0 0]{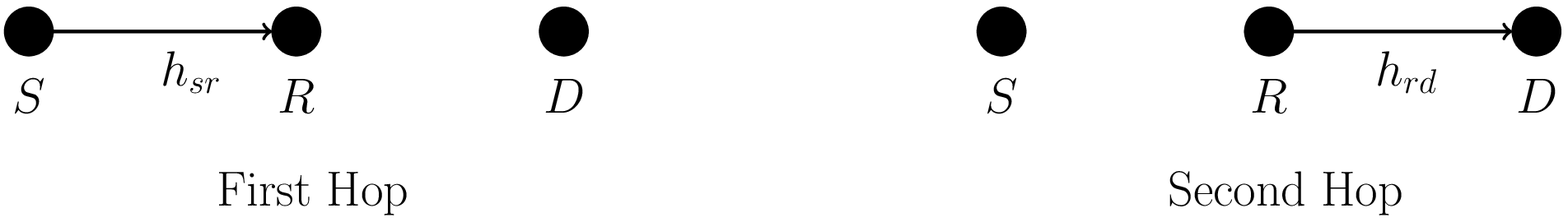}
\label{sys:subfig1}
}
\subfigure[]{
\includegraphics[scale=0.6,trim= 0 0 0 0]{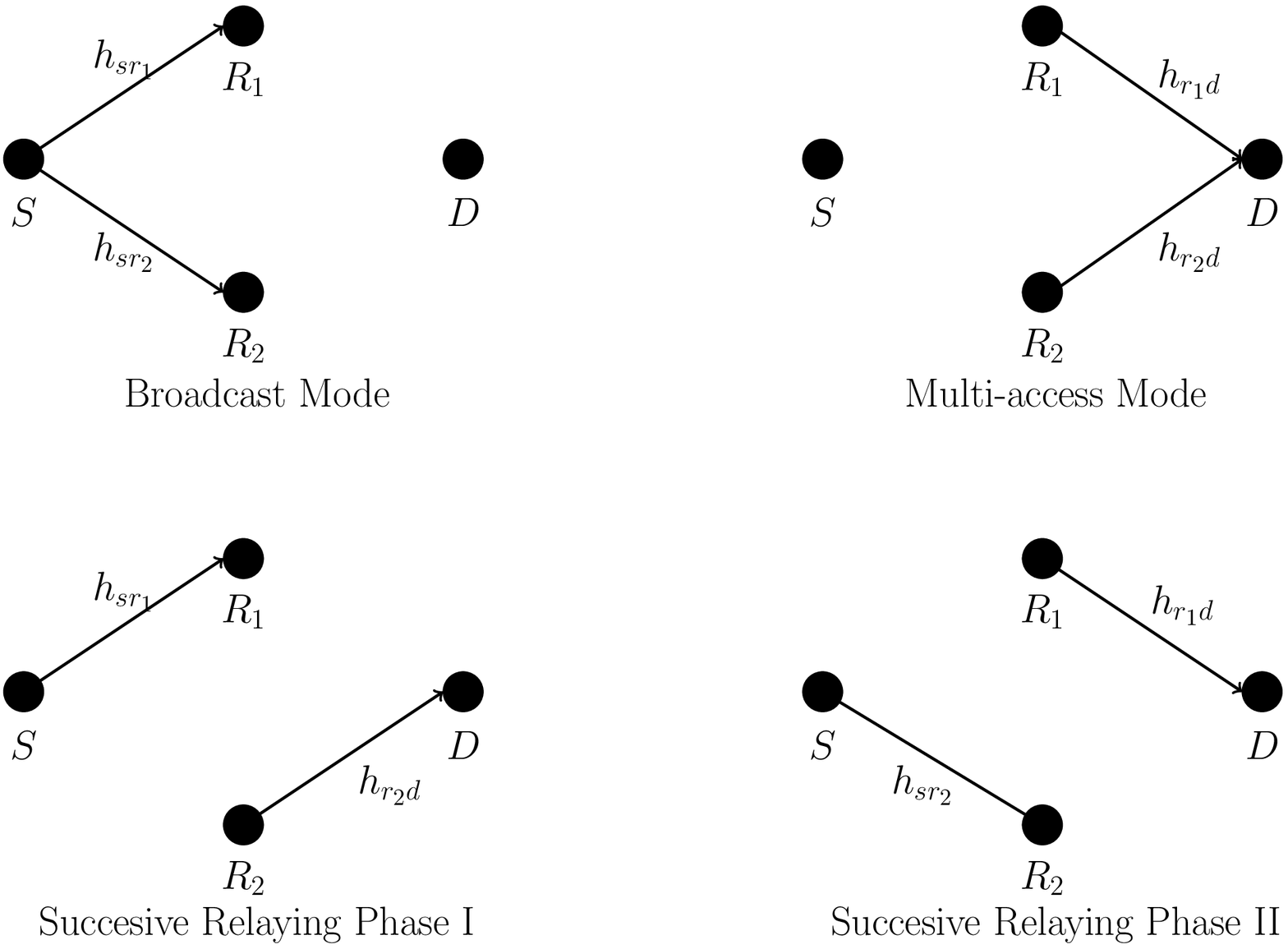}
\label{sys:subfig2}
}
\caption{Transmission modes with (a) one relay, (b) two parallel relays (diamond relay channel).}
\label{sys}%
\end{figure}

\subsection{Achievable Rates}\label{achievable rates}
In this paper, we consider Shannon capacity as the rate-power function of a given link, i.e., $C(p) \triangleq \log(1+\alpha p)$ where $\alpha=|h|^2$ is the power gain of the link and $p$ is the transmission power.

In the single relay case, when the transmission powers of $S$ and $R$ are $p_{s}$ and $p_{r}$, respectively, we have the data rates from $S$ to $R$ (first hop in Figure \ref{sys:subfig1}) and $R$ to $D$ (second hop in Figure \ref{sys:subfig1})  respectively as
\begin{eqnarray}\label{single_s_ach}
c_{s}=\log(1+\alpha_{sr} p_{s}),
\end{eqnarray}
and
\begin{eqnarray}\label{single_r_ach}
c_{r}=\log(1+\alpha_{rd} p_{r}).
\end{eqnarray}

In the two relay case, there are four transmission modes as shown in Figure \ref{sys} \cite{Khandani}. We will assume that $R_1$ and $R_2$ do not beamform towards the destination, hence we will only focus on independent information transmission to and from the relays. While beamforming increases achievable rates, it also requires tighter coordination and synchronization among the relays \cite{Poor}, which may be difficult to achieve for energy harvesting networks typically consisting of small and inexpensive nodes.
\begin{itemize}
\item \textbf{Broadcast mode:} $S$ broadcasts independent information to $R_1$ and $R_2$ resulting in the rate region \cite{infotheory}
\begin{eqnarray}\label{broad ach p}
 c_{b_{r_1}} & \leq & \log(1+\eta \alpha_{sr_1} p_{b})\\\label{broad ach c}
 c_{b_{r_2}} & \leq &  \log\left(1+\frac{(1-\eta)\alpha_{sr_2}p_{b} }{\eta \alpha_{sr_2}p_{b} +1}\right),
\end{eqnarray}
where $p_b$ is the source power used in the broadcast mode and $\eta$ is the power sharing parameter such that $\eta$ portion of the power is used to transmit data to $R_1$. Here, $c_{b_{r_1}}$ is the data rate from $S$ to $R_1$, and $c_{b_{r_2}}$ is the data rate from $S$ to $R_2$.

Operating on the boundary of this rate region, the required transmission power of the source $p_{b}$ can be computed as
\begin{eqnarray}\label{broadcast_power}
p_{b}=\left(\frac{1}{\alpha_{sr_2}}-\frac{1}{\alpha_{sr_1}}\right)e^{c_{b_{r_2}}}-\frac{1}{\alpha_{sr_2}}+\frac{1}{\alpha_{sr_1}}e^{c_{b_{r_1}}+ c_{b_{r_2}}}.
\end{eqnarray}
Note that $p_{b}$ is convex function of $c_{b_{r_1}}$ and $c_{b_{r_2}}$. For notational convenience, we will use $ f_{bc}(c_{b_{r_1}},c_{b_{r_2}})$  to denote the right hand side of (\ref{broadcast_power}).

\item \textbf{Multi-access mode:} $R_1$ and $R_2$ jointly send information to $D$ which uses joint decoding. Denoting the transmission powers of $R_1$ and $R_2$ in multi-access mode as $p_{r_1m}$ and $p_{r_2m}$, respectively, we obtain the following rate region for the multi-access mode \cite{infotheory}
\begin{eqnarray}\label{multi acc 1}
 c_{r_{1}m} &\leq&  \log(1+\alpha_{r_1d} p_{r_{1}m})\\\label{multi acc 2}
 c_{r_{2}m} &\leq&  \log(1+\alpha_{r_2d} p_{r_{2}m})\\\label{multi acc 3}
 c_{r_{1}m}+ c_{r_{2}m} &\leq&  \log(1+\alpha_{r_1d} p_{r_{1}m}+\alpha_{r_2d} p_{r_{2}m})
\end{eqnarray}
where $c_{r_{1}m}$ and $c_{r_{2}m}$ refer to the data rates from $R_1$ and $R_2$ to $D$, respectively.

For notational convenience, we define the following concave and non-decreasing functions.
\begin{eqnarray}\label{multi f 1}
f_{r_{1}m}(p_{r_{1}m})&\triangleq& \log(1+\alpha_{r_1d} p_{r_{1}m}),\\\label{multi f 2}
f_{r_{2}m}(p_{r_{2}m})&\triangleq& \log(1+\alpha_{r_2d} p_{r_{2}m}) ,\\\label{multi f 3}
f_{rm}(p_{r_{1}m},p_{r_{2}m})&\triangleq& \log(1+\alpha_{r_1d} p_{r_{1}m}+\alpha_{r_2d} p_{r_{2}m}).
\end{eqnarray}

\item \textbf{Successive relaying phase I:} While $S$ transmits to $R_1$, $R_2$ transmits to $D$ with transmission powers $p_{sI}$ and $p_{r_2I}$, respectively. Accordingly, the data rates from $S$ to $R_1$ and from $R_2$ to $D$ are given by
    \begin{eqnarray}\label{succ s I}
    c_{sI}=\log(1+\alpha_{sr_1} p_{sI}),
    \end{eqnarray}
    and
    \begin{eqnarray}\label{succ r I}
    c_{r_{2}I}=\log(1+\alpha_{r_2d} p_{r_2I}),
    \end{eqnarray}
    respectively.

\item \textbf{Successive relaying phase II:} While $S$ transmits to $R_2$, $R_1$ transmits to $D$ with transmission powers $p_{sII}$ and $p_{r_2II}$, respectively. Accordingly, the data rates from $S$ to $R_2$ and from $R_1$ to $D$ are given by
    \begin{eqnarray}\label{succ s II}
    c_{sII}=\log(1+\alpha_{sr_2} p_{sII}),
    \end{eqnarray}
    and
    \begin{eqnarray}\label{succ r II}
    c_{r_{1}II}=\log(1+\alpha_{r_1d} p_{r_1II}),
    \end{eqnarray}
    respectively.
\end{itemize}

The following lemmas establish some properties of the optimal transmission policies.

\begin{lemma}\label{lemma 1}
In an epoch, constant power transmission is optimal.
\end{lemma}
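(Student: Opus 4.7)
The plan is to invoke Jensen's inequality via the concavity of the rate-power relationships, exploiting the fact that nothing changes inside an epoch $[t_{i-1},t_i]$: no new energy arrives, and the operating mode is fixed. I would fix the epoch, start from an arbitrary feasible policy that uses time-varying power $p(t)$ at some node, and compare it with the policy obtained by replacing $p(t)$ by its time-average $\bar p=\frac{1}{\tau_i}\int_{t_{i-1}}^{t_i} p(t)\,dt$, so that the total energy consumed in the epoch is unchanged by construction.

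Next I would argue that throughput weakly increases under this time-averaging in each of the four transmission modes. For point-to-point links in (\ref{single_s_ach})--(\ref{single_r_ach}) and (\ref{succ s I})--(\ref{succ r II}), concavity of $\log(1+\alpha p)$ combined with Jensen's inequality yields
\begin{equation*}
\int_{t_{i-1}}^{t_i}\log(1+\alpha p(t))\,dt \;\le\; \tau_i\,\log\bigl(1+\alpha \bar p\bigr).
\end{equation*}
For the broadcast mode, $f_{bc}(c_{b_{r_1}},c_{b_{r_2}})$ in (\ref{broadcast_power}) is convex in the rate pair, so applying Jensen to the rate trajectory shows that the time-averaged rate pair is supportable with no more than the time-averaged power. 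For the multi-access mode, each of the three bounds (\ref{multi acc 1})--(\ref{multi acc 3}) is concave in $(p_{r_{1}m},p_{r_{2}m})$, so averaging the powers weakly enlarges the admissible rate region, and hence the sustainable per-epoch throughput.

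The step I expect to demand the most care is verifying that time-averaging preserves all feasibility conditions at every instant, not merely at epoch boundaries: energy causality, data causality, and the $B_{max}$ buffer bound. Since no energy arrival occurs inside the epoch, energy causality is controlled by the cumulative energy expended at $t_i$, which is unchanged by averaging. Relays are half-duplex and the mode is fixed in the epoch, so any relay that transmits during the epoch is not simultaneously receiving; under constant power, cumulative bits sent grow linearly and remain bounded by the buffer level at $t_{i-1}$ at every intermediate instant. Similarly, a listening relay accumulates bits linearly under constant-power reception, so if the $B_{max}$ cap is respected at $t_i$ it is respected throughout the epoch by monotonicity. These observations together yield a feasible constant-power policy whose throughput weakly dominates the original one, thereby proving optimality of constant-power transmission within each epoch.
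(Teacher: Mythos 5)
Your argument is essentially the paper's proof: Jensen's inequality applied to the concave point-to-point and multi-access rate--power functions, together with the convexity of the broadcast-mode power $f_{bc}$ as a function of the rate pair, shows that time-averaging the powers (or rates) within an epoch preserves the energy budget while weakly increasing the data transferred. The intra-epoch feasibility verification you add goes beyond what the paper states in this lemma (the paper defers the scheduling of source/relay turns within an epoch, and hence the instant-by-instant data-causality and buffer checks, to a separate discussion after the optimization problem is formulated), but it does not change the approach.
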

\begin{proof}
The proof follows from the concavity of the rate-power functions and Jensen's inequality \cite{infotheory}. First, we argue this for the point-to-point links. Consider any transmission policy for which the transmission power changes in an epoch. We can find another transmission policy which has constant transmission power such that the new policy consumes the same amount of energy as the previous one. However, due to concavity of the rate-power function, the new policy transmits more data \cite{uysal}. For the single relay case ((\ref{single_s_ach})-(\ref{single_r_ach})) and successive relaying phases I and II ((\ref{succ s I})-(\ref{succ r II})) point-to-point rate-power functions apply and hence optimality of constant power transmission in an epoch is established. For the broadcast mode, the proof follows from the strict convexity of the transmission power
as a function of the data rates as given in (\ref{broadcast_power}) and for the multi-access mode, it follows from the concavity of the rate region (\ref{multi acc 1})-(\ref{multi acc 3}) as a function of the transmission powers $p_{r_{1}m}$ and $p_{r_{2}m}$. As a result, we can conclude that the constant transmission policy is optimal for both single and two relay scenarios.
\end{proof}

\begin{lemma}\label{lemma 2}
Given a feasible transmission policy for which a relay is not {\em on}, i.e., not transmitting or receiving data all the time, we can find another feasible transmission policy that ensures the relays are always on without decreasing the throughput.
\end{lemma}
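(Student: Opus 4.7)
My plan is a constructive stretching argument that exploits concavity of the rate--power functions. Let $\pi$ be a feasible policy in which some relay $R_j$ is off during a maximal interval $I=[a,b]$, meaning that $R_j$ is neither receiving from $S$ nor transmitting to $D$ on $I$. I will construct a feasible policy $\pi'$ in which $R_j$ is active throughout $I$ and whose throughput is no smaller than that of $\pi$. Iterating over all such idle intervals and both relays---a procedure that terminates because the number of epochs $K$ is finite---delivers the claim.

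By Lemma \ref{lemma 1}, within each epoch the transmission powers are constant. If $R_j$ is entirely inactive in $\pi$, then turning $R_j$ on during $I$ with arbitrarily small reception or transmission energy is harmless and the claim is immediate. Otherwise, pick an epoch adjacent to $I$ during which $R_j$ is active. For concreteness, suppose $R_j$ transmits to $D$ on $[c,a]$ with constant power $P$ over duration $\tau=a-c$. Define $\pi'$ by spreading the same energy $P\tau$ uniformly over the merged interval $[c,b]$ at reduced power $P' = P\tau/(\tau+|I|)$. Bernoulli's inequality $(1+\alpha_{r_jd}P)^{\tau/(\tau+|I|)} \leq 1+\alpha_{r_jd}P'$, upon taking logarithms and multiplying by $\tau+|I|$, shows that the data delivered to $D$ on $[c,b]$ in $\pi'$ is at least the data delivered on $[c,a]$ in $\pi$. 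The symmetric argument---using concavity of the rate functions and convexity of $f_{bc}$ for the broadcast mode---applies when the adjacent epoch has $S$ transmitting to $R_j$ in point-to-point, broadcast, or successive-relaying mode.

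The remaining task is to verify the three feasibility constraints. Energy causality is preserved because the cumulative energy used by each node up to any time $t$ in $\pi'$ is no larger than in $\pi$. Data causality is preserved for a backward stretch of an $R_j$-to-$D$ epoch since $R_j$ only transmits bits that it had already received by the start of that epoch in $\pi$, and for a forward stretch of an $S$-to-$R_j$ epoch since bits merely arrive earlier at $R_j$. The main obstacle is the buffer constraint $B_{\max}$: when a forward stretch from the source delivers data earlier, the buffer of $R_j$ may temporarily exceed $B_{\max}$. To resolve this, I would prefer the backward stretch of an $R_j$-to-$D$ epoch whenever possible, using the forward source-to-relay stretch only when no backward option exists and truncating it at the first instant of saturation, covering the remainder of $I$ in subsequent iterations (possibly alternating between the two stretch types). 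Because $K$ is finite and each iteration strictly reduces the total idle duration, the procedure terminates with a policy in which every relay is active on the entire time horizon, without decreasing the throughput.
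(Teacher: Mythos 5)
Your overall architecture---absorb each idle interval of a relay by stretching an adjacent active transmission involving that relay---is the same as the paper's. But the invariant you choose to conserve is the opposite of the paper's, and that choice opens a real gap. You keep the \emph{energy} $P\tau$ fixed and let the delivered \emph{data} grow (your Bernoulli step correctly shows $(\tau+|I|)\log(1+\alpha_{r_jd}P') \geq \tau\log(1+\alpha_{r_jd}P)$). For the backward stretch of an $R_j$-to-$D$ transmission this immediately threatens data causality at $R_j$: in $\pi'$ the relay must transmit \emph{more} bits over $[c,b]$ than it did over $[c,a]$ in $\pi$, while its received data is unchanged. If the buffer of $R_j$ was empty (or nearly so) at time $a$ in $\pi$---which is exactly the situation in which a relay goes idle in a sensible policy---those extra bits do not exist, and $\pi'$ is infeasible. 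Your feasibility paragraph asserts that ``$R_j$ only transmits bits that it had already received,'' but that is precisely what fails once the Bernoulli inequality is strict. (A symmetric worry applies to energy causality if the stretched interval crosses an energy-arrival instant and you front-load consumption, though your backward/forward case split mostly handles that.)

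The repair is to conserve the \emph{data} instead of the energy, which is what the paper does: spread the same number of bits over the merged interval at the (strictly lower) power $\tilde P$ solving $(\tau+|I|)\log(1+\alpha \tilde P)=\tau\log(1+\alpha P)$. By convexity of $\tilde P \mapsto l\,(e^{c/l}-1)/\alpha$ in $l$ (equivalently, concavity of the rate--power function), the total energy consumed can only decrease, so energy causality is preserved; and since every cumulative data quantity at every epoch boundary is unchanged, data causality and the $B_{max}$ constraint are untouched---no truncation-at-saturation bookkeeping or termination argument is needed. The throughput is then merely preserved rather than increased, but that is all the lemma claims. Your instinct to extract extra throughput from the idle time is the job of the optimization problem itself, not of this reduction lemma.
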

\begin{proof}
Consider a feasible transmission policy for which one of the relays (or the relay in the case of a single relay) is not always on. We can remove the idle times by increasing transmission duration of another node (source or the other relay) while keeping total transmitted data the same. Due to monotonically increasing property of the rate-power functions (\ref{single_s_ach})-(\ref{succ r II}), the new policy delivers the same amount of data to the destination and consumes less energy; hence, it is feasible.
\end{proof}

Lemma \ref{lemma 1} suggests that the transmission powers of the source and the relays remain constant within an epoch. In the following discussion, $i=1,...,K$ refers to the epoch index. Accordingly, for the single relay case we denote the transmission powers of $S$ and $R$ by $p_{s,i}$ and $p_{r,i}$ with corresponding durations $l_{s,i}$ and $l_{r,i}$, respectively. For the case of two relays, we denote the transmission powers of $S$ for the broadcast mode by $p_{b,i}$ with duration $l_{b,i}$. For successive relaying phases I and II the transmission powers of $S$ are denoted by $p_{sI,i}$ and $p_{sII,i}$ with durations $l_{I,i}$ and $l_{II,i}$, respectively. The transmission powers of $R_1$ and $R_2$ in multi-access mode with duration $l_{m,i}$ are denoted by $p_{r_{1}m,i}$ and $p_{r_{2}m,i}$, respectively. For successive relaying phases I and II the transmission powers of $R_1$ and $R_2$ are denoted by $p_{r_1I,i}$ and $p_{r_2II,i}$, respectively. As argued in Lemma \ref{lemma 2}, without loss of generality transmission policies can be restricted to the ones for which the relays $R_1$ and $R_2$ are always on. Therefore, we consider that the transmission time between $S$ and $R_1$, and $R_2$ and $D$ are the same in successive relaying phase I. Similarly, we consider the same transmission time between $S$ and $R_1$, and $R_2$ and $D$ in successive relaying phase II. Accordingly, while evaluating the rates in (\ref{single_s_ach})-(\ref{succ r II}) during an epoch, the corresponding powers in that epoch will be used along with subscripts $i$ in the rate variables  to indicate the epoch index.



\section{Two-hop Communication with One Relay}\label{two hop}
In this section, we investigate the throughput of the single relay case shown in Figure \ref{system:subfig1}, and consider the achievable rates in (\ref{single_s_ach}) and (\ref{single_r_ach}).
Since constant power transmission in each epoch is optimal by Lemma~\ref{lemma 1}, it is sufficient to consider data causality and buffer size constraints only at energy arrival times.
%

\begin{lemma}\label{lemma 3}
For the single relay case, in an optimal transmission policy, $S$ and $R$ deplete their batteries and transmit same amount of data until the deadline.
\end{lemma}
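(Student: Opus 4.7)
I plan to prove Lemma~\ref{lemma 3} by contradiction: assume an optimal policy exists that fails at least one of the three claims—(a) $S$ depletes its battery, (b) $R$ depletes its battery, (c) $D_s = D_r$ at the deadline, where $D_s \triangleq \sum_i l_{s,i}\, c_{s,i}$ and $D_r \triangleq \sum_i l_{r,i}\, c_{r,i}$—and construct a small feasible perturbation that strictly increases $D_r$. By Lemma~\ref{lemma 1} I restrict attention to per-epoch constant powers, and by Lemma~\ref{lemma 2} I assume $R$ is always on, so that $l_{s,i}+l_{r,i}=\tau_i$ in every epoch.

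I would first establish (c). If $D_s>D_r$, the relay buffer holds $D_s-D_r>0$ bits at $T$. If additionally $R$ has residual energy $\Delta_r>0$, raising $p_{r,K}$ by a sufficiently small $\epsilon>0$ transmits extra bits (absorbed by the buffer surplus) at extra cost $\epsilon\, l_{r,K}<\Delta_r$, touching no earlier causality constraint and strictly increasing $D_r$—a contradiction. Hence $D_s>D_r$ forces $R$ to be depleted. In that subcase, strict monotonicity and concavity of $\log(1+\alpha_{sr}p)$ let us reduce some $p_{s,i}$ so that $S$ delivers exactly $D_r$ bits, freeing source energy $\Delta_s'>0$; redirecting $\Delta_s'$ to extend the source transmission window (and thus, via Lemma~\ref{lemma 2}, rebalance $R$'s window) yields an accompanying improvement at $R$ and again a strict increase in $D_r$, a contradiction. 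Hence $D_s=D_r$ at any optimum.

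Given $D_s=D_r$, I would then establish battery depletion. Suppose $S$ has residual energy $\Delta_s>0$. Raising $p_{s,K}$ by a small $\epsilon$ delivers additional bits to $R$—buffer-feasible because $D_s=D_r$ leaves the maximum room—at cost $\epsilon\, l_{s,K}<\Delta_s$. If $R$ also has residual energy $\Delta_r>0$, then raising $p_{r,K}$ simultaneously forwards those new bits and strictly increases $D_r$. If instead $R$ is depleted, I first redistribute $R$'s energy across epochs using concavity of $\log(1+\alpha_{rd}p)$ to create a small slack $\Delta_r'>0$ while preserving the current $D_r$, and then apply the previous step. The case $\Delta_r>0$ while $S$ is depleted is handled symmetrically, ruling out residual energy at either node.

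The main obstacle is ensuring that a local perturbation in a single epoch does not cascade into violations of the energy-causality or buffer-capacity inequalities at other epochs, since these constraints couple all the epochs together. My plan is to always perturb in the last epoch $K$—so that only the deadline-time cumulative inequalities are affected—and to choose $\epsilon$ strictly smaller than the minimum slack across every inequality touched by the perturbation. This localization, combined with Lemma~\ref{lemma 1}'s guarantee of per-epoch constant powers, should reduce the argument to a finite collection of scalar inequality checks and make the contradiction rigorous.
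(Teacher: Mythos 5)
Your overall plan (contradiction via a small perturbation confined to the last epoch, so that only the deadline-time cumulative constraints are touched) is sound and matches the spirit of the paper's proof, and the two ``easy'' subcases --- where the node that needs to push more data also happens to have residual energy --- are handled correctly. However, the two remaining subcases rest on mechanisms that do not work, and what is missing in both is the single idea the paper's proof actually runs on: for fixed consumed energy $E$ and gain $\alpha$, the delivered data $l\log(1+\alpha E/l)$ is \emph{strictly increasing in the transmission duration} $l$, so the productive move is always to \emph{transfer time} from $S$ to $R$ (or vice versa), not to shuffle energy.

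Concretely: (i) In the subcase $D_s>D_r$ with $R$ depleted, you propose to free source energy and then ``redirect $\Delta_s'$ to extend the source transmission window.'' This is backwards --- extending $S$'s window takes time away from $R$ and cannot increase $D_r$; the correct step (the paper's) is to let $S$ deliver its data in \emph{less} time (at higher power, using the freed energy) and hand the saved time to $R$, whose throughput then strictly increases at fixed energy. Moreover, shrinking $S$'s delivered data all the way down to $D_r$ and then raising $D_r$ would violate the deadline data-causality constraint, so the tradeoff must be infinitesimal. (ii) In the subcase $D_s=D_r$ with $S$ having residual energy and $R$ depleted, you propose to ``redistribute $R$'s energy across epochs using concavity to create a small slack $\Delta_r'>0$ while preserving $D_r$.'' That is impossible in general: in an optimal policy $R$'s power profile is already the best use of its energy over its allotted time (e.g., constant power), so by concavity any energy you withdraw strictly decreases $D_r$. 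Again the fix is a time transfer: $S$ uses its residual energy to raise $p_{s,K}$ \emph{and} shorten $l_{s,K}$ while delivering at least as much data, and the freed time lets $R$ deliver strictly more with its unchanged energy. With these two repairs your argument becomes essentially the paper's proof, reorganized to establish $D_s=D_r$ first and battery depletion second.
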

\begin{proof}
Suppose both $S$ and $R$ have non-zero energy in their batteries at time $T$, and the transmitted data by $S$ is more than $R$. First, we show that given any feasible transmission policy for which the battery of $R$ has nonzero energy at the deadline $T$, we can find another policy which delivers as much as data by depleting all the energy in the battery. This follows from the fact that the rate-power function in (\ref{single_r_ach}) is monotonically increasing function of power. Therefore, the relay can use this excess energy to increase its transmission power $p_{r,K}$ in the last epoch while reducing the transmission duration $l_{r,K}$ such that the delivered data remains the same. While keeping the consumed energy the same, increasing transmission duration strictly increases transmitted data \cite{uysal}. Therefore, the new policy can be replaced by another policy such that the transmission duration of $S$ in the last epoch, $l_{s,K}$, is increased while the last relay transmission is postponed towards to the deadline. Therefore, in the new policy the source delivers more data than the previous policy. We can further increase the total transmitted data by $S$ by depleting all the energy in its battery. As a result, the new policy depletes the batteries of $S$ and $R$ with the source transmitting more data than the initial policy. Now, this policy can be replaced by another one of higher rate obtained by increasing the duration of the last relay transmission while decreasing duration of the preceding source transmission under  data causality and relay buffer size constraints. Combining these, we can find a feasible policy transmitting higher data such that source and relay deplete their batteries and transmit same amount of data until the deadline.
\end{proof}

Based on the above arguments, the throughput optimization problem can be formulated as follows, where the maximization is over  $p_{r,i}$, $p_{s,i}$, $l_{r,i}$, and $l_{s,i}$, $i=1,...,K$:
\begin{subequations}
\label{convex 1}
\begin{align} \label{convex 1:1}
\underset{}{\operatorname{max}}~~ & \sum_{i=1}^{K}{l_{r,i}\log(1+\alpha_{rd}p_{r,i})} \\ \label{convex 1:2}
\text{s.t.}~~ & \sum_{j=1}^{i}{l_{r,j}p_{r,j}} \leq \sum_{j=1}^{i}{E_{r,j}}, ~\forall i, \\ \label{convex 1:3}
& \sum_{j=1}^{i}{{l_{s,j}}p_{s,j}}\leq \sum_{j=1}^{i}{E_{s,j}}, ~ \forall i, \\ \label{convex 1:4}
& \sum_{j=1}^{i}{l_{r,j}\log(1+\alpha_{rd} p_{r,j})} \leq \sum_{j=1}^{i}{{l_{s,j}}\log(1+\alpha_{sr} p_{s,j})}, ~ \forall i,\\ \label{convex 1:45}
& \sum_{j=1}^{i}{{l_{s,j}}\log(1+\alpha_{sr} p_{s,j})} \leq \sum_{j=1}^{i}{l_{r,j}\log(1+\alpha_{rd} p_{r,j})} + B_{max}, ~ \forall i,\\ \label{convex 1:5}
& l_{r,i}+l_{s,i} \leq \tau_{i}, ~  \forall i,\\ \label{convex 1:6}
& 0\leq p_{r,i}, ~ 0\leq p_{s,i}, ~ 0\leq l_{r,i}, ~ 0\leq l_{s,i}, ~ \forall i,
\end{align}
\end{subequations}
where the constraints in (\ref{convex 1:2}), (\ref{convex 1:3}) are due to energy causality at $R$ and $S$, respectively, and the constraints in (\ref{convex 1:4}) and (\ref{convex 1:45}) are due to data causality and finite data buffer size at $R$. The half-duplex constraint appears in (\ref{convex 1:5}). Note that since the total amount of data delivered to $D$ is equal to the amount of data transmitted by $R$, the throughput maximization problem corresponds to maximization of the total data transmitted by $R$ as in (\ref{convex 1:1}) which is equal to (\ref{convex 1:4}) evaluated at $i=K$. The above optimization problem is not convex because of the constraints in (\ref{convex 1:2})-(\ref{convex 1:45}). We rewrite (\ref{convex 1}) in terms of $c_{r,i}$, $c_{s,i}$, $l_{r,i}$, and $l_{s,i}$ as follows:
\begin{subequations}\label{convex 2}
\begin{align}\label{convex 2:1}
\underset{}{\operatorname{max}}~~ & \sum_{i=1}^{K}{c_{r,i}} \\\label{convex 2:2}
\text{s.t.}~~ & \sum_{j=1}^{i}{\frac{l_{r,j}}{\alpha_{rd}}\left(e^{\frac{c_{r,j}}{l_{r,j}}}-1\right)} \leq \sum_{j=1}^{i}{E_{r,j}}, ~ \forall i, \\\label{convex 2:3}
& \sum_{j=1}^{i}{\frac{l_{s,j}}{\alpha_{sr}}\left(e^{\frac{c_{s,j}}{l_{s,j}}}-1\right)} \leq \sum_{j=1}^{i}{E_{s,j}}, ~ \forall i, \\ \label{convex 2:4}
& \sum_{j=1}^{i}{c_{r,j}} \leq \sum_{j=1}^{i}{c_{s,j}}, ~ \forall i,\\ \label{convex 2:45}
& \sum_{j=1}^{i}{c_{s,j}} \leq \sum_{j=1}^{i}{c_{r,j}} + B_{max}, ~ \forall i,\\ \label{convex 2:5}
& l_{r,i}+l_{s,i} \leq \tau_{i}, ~ \forall i,\\ \label{convex 2:6}
& 0\leq c_{r,i}, ~ 0\leq c_{s,i}, ~ 0\leq l_{r,i}, ~ 0\leq l_{s,i}, ~\forall i.
\end{align}
\end{subequations}
Note that $l_{r,i}e^{\frac{c_{r,i}}{l_{r,i}}}$ is perspective of the convex function $e^{c_{r,i}}$, hence it is a convex function of $l_{r,i}$ and $c_{r,i}$ \cite{Boyd}. Here, we consider $l_{r,i}e^{\frac{c_{r,i}}{l_{r,i}}}=0$ when $l_{r,i}=0$. Similarly, ${l_{s,i}}e^{\frac{c_{s,i}}{{l_{s,i}}}}$ in (\ref{convex 2:3}) is a convex function of $l_{s,i}$ and $c_{s,i}$.  Therefore, the optimization problem in (\ref{convex 2}) is convex and can be efficiently solved \cite{Boyd}.

The solution of the optimization problem provides the optimal transmission powers of $S$ and $R$ and their durations for each epoch, but we need to schedule the transmissions to obtain a feasible policy. Within an epoch, moving transmission of source to an earlier time by delaying relay transmission maintains optimality  provided the relay data buffer does not overflow. This is because postponing the transmission of $R$ allows the relay to store more energy and data. Therefore, without loss of optimality, we will consider transmission policies such that in each epoch,  the source transmits until the data buffer of the relay becomes full,  or the source reaches its optimal transmit duration, which is followed by relay transmission until the data buffer of the relay becomes empty, or the relay reaches its optimal transmit duration in that epoch. The source and relay take turns in this fashion until the end of the epoch.

Next, we identify properties of the optimal transmission policy using KKT conditions which are both necessary and sufficient due to convexity of the optimization problem in (\ref{convex 2}). These properties provide the optimal structure of the transmission policy and are useful in designing online algorithms; for example see \cite{elza}.

The Lagrangian of (\ref{convex 2}) is defined as follows:
\begin{align}
\label{lagran 1}
\mathcal{L} =& \sum_{i=1}^{K}{c_{r,i}}-\sum_{i=1}^{K}{\lambda_{1,i} \left(\sum_{j=1}^{i}{\frac{l_{r,j}}{\alpha_{rd}}\left(e^{\frac{c_{r,j}}{l_{r,j}}}-1\right)}-\sum_{j=1}^{i}{E_{r,j}}\right)}\nonumber\\
            & -\sum_{i=1}^{K}{\lambda_{2,i} \left(\sum_{j=1}^{i}{\frac{l_{s,j}}{\alpha_{sr}}\left(e^{\frac{c_{s,j}}{l_{s,j}}}-1\right)}-\sum_{j=1}^{i}{E_{s,j}}\right)} -\sum_{i=1}^{K}{\lambda_{3,i} \left(\sum_{j=1}^{i}{c_{r,j}}-\sum_{j=1}^{i}{c_{s,j}} \right)} \nonumber\\
            &-\sum_{i=1}^{K}{\lambda_{4,i} \left(\sum_{j=1}^{i}{c_{s,j}}-\sum_{j=1}^{i}{c_{r,j}}-B_{max} \right)} -\sum_{i=1}^{K}{\lambda_{5,i} \left(l_{r,i}+l_{s,i}-\tau_i \right)} \nonumber\\
            &  +\sum_{i=1}^{K}{\lambda_{6,i} l_{r,i}} +\sum_{i=1}^{K}{\lambda_{7,i} l_{s,i}}+\sum_{i=1}^{K}{\lambda_{8,i}  c_{r,i}} +\sum_{i=1}^{K}{\lambda_{9,i}  c_{s,i}},
\end{align}
where $\lambda_{j,i}\geq 0$, $j=1, \ldots, 9$ are KKT multipliers corresponding to (\ref{convex 2:2})-(\ref{convex 2:6}).

Differentiating the Lagrangian with respect to $c_{r,i}$ and $c_{s,i}$, we obtain the following:
\vspace{-0.00in}
\begin{align}
\label{der 2}
\frac{\partial \mathcal{L}}{\partial c_{r,i}} &= 1-\frac{e^{\frac{c_{r,i}}{l_{r,i}}}}{\alpha_{rd}}\sum_{j=i}^{K}{\lambda_{1,j}}-\sum_{j=i}^{K}{\lambda_{3,j}}+\sum_{j=i}^{K}{\lambda_{4,j}}+\lambda_{8,i} =0,\\
\label{der 3}
\frac{\partial \mathcal{L}}{\partial c_{s,i}} &= -\frac{e^{\frac{c_{s,i}}{{l_{s,i}}}}}{\alpha_{sr}}\sum_{j=i}^{K}{\lambda_{2,j}}+\sum_{j=i}^{K}{\lambda_{3,j}}-\sum_{j=i}^{K}{\lambda_{4,j}}+\lambda_{9,i} =0.
\end{align}

Using (\ref{der 2}) and replacing $c_{r,i}$ with $l_{r,i}\log(1+\alpha_{rd} p_{r,i})$, we can obtain the optimal relay transmission power $p_{r,i}^*$ as:
\begin{eqnarray}
\label{leq 1}
&p_{r,i}^* = \left[\frac{1-\sum_{j=i}^{K}{\lambda_{3,j}+\sum_{j=i}^{K}\lambda_{4,j}}}{\sum_{j=i}^{K}{\lambda_{1,j}}}-\frac{1}{\alpha_{rd}}\right]^+,
\end{eqnarray}
where $[x]^+=\max\{0,x\}$.

Similarly using (\ref{der 3}) and replacing $c_{s,i}$ with $l_{s,i}\log(1+\alpha_{sr} p_{s,i})$, the optimal source transmission power $p_{s,i}^*$ becomes:
\begin{eqnarray}
\label{leq 2}
&p_{s,i}^* = \left[\frac{\sum_{j=i}^{K}{\lambda_{3,j}-\sum_{j=i}^{K}\lambda_{4,j}}}{\sum_{j=i}^{K}{\lambda_{2,j}}}-\frac{1}{\alpha_{sr}}\right]^+.
\end{eqnarray}

\begin{lemma}\label{lemma 4}
For the single relay case, whenever $p_{r,i}^*$ strictly increases from epoch $i$ to $i+1$, either the battery or the data buffer of $R$ must be empty at $t=t_i$, and whenever $p_{r,i}^*$ strictly decreases from epoch $i$ to $i+1$, the data buffer of $R$ must be full at $t=t_i$.
\end{lemma}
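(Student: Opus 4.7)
My plan is to read the behavior of $p_{r,i}^*$ directly off the closed-form KKT expression (\ref{leq 1}) and then translate the signs of the relevant KKT multipliers into statements about which constraints are tight via complementary slackness.

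First I would introduce the shorthand $A_i = \sum_{j=i}^{K}\lambda_{3,j}$, $B_i = \sum_{j=i}^{K}\lambda_{4,j}$, and $C_i = \sum_{j=i}^{K}\lambda_{1,j}$, so that (\ref{leq 1}) becomes $p_{r,i}^* = \bigl[(1-A_i+B_i)/C_i - 1/\alpha_{rd}\bigr]^+$. The telescoping identities $A_{i+1}=A_i-\lambda_{3,i}$, $B_{i+1}=B_i-\lambda_{4,i}$, $C_{i+1}=C_i-\lambda_{1,i}$ then let me compare consecutive epochs. I would also note that constraints (\ref{convex 2:4}) and (\ref{convex 2:45}) cannot be simultaneously tight at the same $i$ unless $B_{max}=0$, so $\lambda_{3,i}\lambda_{4,i}=0$; this clean dichotomy is what lets me identify the \emph{empty} vs. \emph{full} buffer cases later.

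Next I would consider a strict change in the power level. Suppose first that $p_{r,i}^*>0$ and $p_{r,i+1}^*>p_{r,i}^*$. Writing $N=1-A_i+B_i>0$ and $D=C_i>0$ and cross-multiplying the water-level inequality $(N+\lambda_{3,i}-\lambda_{4,i})/(D-\lambda_{1,i}) > N/D$ yields $D\lambda_{3,i} - D\lambda_{4,i} + N\lambda_{1,i} > 0$. Since $\lambda_{j,i}\ge 0$ and the $\lambda_{4,i}$ term is nonpositive, at least one of $\lambda_{1,i}$ or $\lambda_{3,i}$ must be strictly positive. By complementary slackness applied to (\ref{convex 2:2}) and (\ref{convex 2:4}), this means either the energy-causality constraint is tight at $t_i$ (battery of $R$ empty) or the data-causality constraint is tight at $t_i$ (data buffer of $R$ empty). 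For a strict decrease, the analogous inequality $D\lambda_{3,i} - D\lambda_{4,i} + N\lambda_{1,i} < 0$ forces $\lambda_{4,i}>0$, and complementary slackness on (\ref{convex 2:45}) gives that the data buffer of $R$ is full at $t_i$.

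I would close by handling the corner cases hidden in $[\cdot]^+$: if $p_{r,i}^*=0$ and $p_{r,i+1}^*>0$ (strict increase), then the water level $(1-A_\cdot+B_\cdot)/C_\cdot$ still strictly increases across $1/\alpha_{rd}$, so the same algebra applies; the symmetric case handles a strict decrease down to zero. I would also remark that $C_i>0$ for every relevant $i$, since Lemma \ref{lemma 3} ensures that the relay's energy constraint is binding by the deadline, so $\lambda_{1,K}>0$ and $C_i\ge C_K>0$. The main obstacle is really just the careful sign bookkeeping together with the observation that $\lambda_{3,i}$ and $\lambda_{4,i}$ cannot both be active; once those points are in place, the conclusion is a one-line consequence of complementary slackness.
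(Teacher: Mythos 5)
Your proof is correct and follows essentially the same route as the paper: both read the monotonicity of $p_{r,i}^*$ off the closed-form KKT expression (\ref{leq 1}) by comparing the partial sums of multipliers at epochs $i$ and $i+1$, and then invoke complementary slackness on (\ref{convex 2:2}), (\ref{convex 2:4}), and (\ref{convex 2:45}) to translate $\lambda_{1,i}>0$, $\lambda_{3,i}>0$, $\lambda_{4,i}>0$ into empty battery, empty buffer, and full buffer, respectively. Your version simply makes explicit the sign bookkeeping (the cross-multiplied inequality and the $[\cdot]^+$ corner cases) that the paper leaves implicit.
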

\begin{proof}
We provide a proof using the KKT conditions; alternatively a proof by contradictions as in \cite[Lemmas 4, 5, 7]{oner}, is also possible.  From the complementary slackness conditions, we can argue that whenever $\lambda_{1,i}>0$, the battery of $R$ must be empty at time $t_i$, and whenever $\lambda_{3,i}>0$, the data buffer of $R$ must be empty at time $t_i$. From (\ref{leq 1}), we observe that whenever $p_{r,i}^* < p_{r,i+1}^*$, either $\lambda_{1,i}>0$ or $\lambda_{3,i}>0$ or both, hence proving the lemma. Similarly, from the complementary slackness conditions, we can argue that whenever $\lambda_{4,i}>0$, the data buffer of $R$ must be full at time $t_i$. Since $p_{r,i}^* > p_{r,i+1}^*$ implies $\lambda_{4,i}>0$, the proof is complete.
\end{proof}
\begin{lemma}\label{lemma 5}
For the single relay case the optimal transmission power of $S$ is non-decreasing, and whenever $p_{s,i}^*$ strictly increases from epoch $i$ to $i+1$, either the battery of $S$ must be empty or the data buffer of $R$ must be full, or both the battery of $S$ and the data buffer of $R$ must be empty at $t=t_i$.
\end{lemma}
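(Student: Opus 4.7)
My plan is to mirror the proof of Lemma \ref{lemma 4} by combining the closed form (\ref{leq 2}) with complementary slackness, supplemented by a primal exchange step in the one sign case that is not immediately visible from the KKT algebra. Set $A_i := \sum_{j=i}^{K} \lambda_{3,j}$, $B_i := \sum_{j=i}^{K} \lambda_{4,j}$, $C_i := \sum_{j=i}^{K} \lambda_{2,j}$. By Lemma \ref{lemma 3} the source depletes its battery by the deadline, so $\lambda_{2,K} > 0$ and $C_i > 0$ for every $i$; then (\ref{leq 2}) reads $p_{s,i}^{*} = \bigl[(A_i - B_i)/C_i - 1/\alpha_{sr}\bigr]^{+}$. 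Using the telescoping identities $A_{i+1} = A_i - \lambda_{3,i}$, $B_{i+1} = B_i - \lambda_{4,i}$, $C_{i+1} = C_i - \lambda_{2,i}$, a direct calculation yields, whenever both powers are positive,
\begin{equation*}
C_i C_{i+1} \bigl(p_{s,i+1}^{*} - p_{s,i}^{*}\bigr) \;=\; (A_i - B_i) \lambda_{2,i} + \lambda_{4,i} C_i - \lambda_{3,i} C_i.
\end{equation*}

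For the non-decreasing statement, I would exploit the observation that $\lambda_{3,i}$ and $\lambda_{4,i}$ cannot both be positive, since the relay buffer cannot be simultaneously empty and full once $B_{max} > 0$. When $\lambda_{3,i} = 0$ the right-hand side is a sum of nonnegative terms. When $\lambda_{3,i} > 0$ (so $\lambda_{4,i} = 0$), tightness of (\ref{convex 2:4}) says $R$ has emptied its buffer at $t_i$, so I can shift a small $\epsilon$ of source transmission backward from epoch $i+1$ to epoch $i$ and advance the same amount of relay transmission, preserving the source energy budget and respecting (\ref{convex 2:4}) at every index while not tightening (\ref{convex 2:45}); if $p_{s,i}^{*} > p_{s,i+1}^{*}$, concavity of $\log(1+\alpha_{sr}p)$ makes this shift strictly improve the objective, contradicting optimality. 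The degenerate boundary cases with $p_{s,i}^{*} = 0$ are either trivial or handled by the same exchange.

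For the strict-increase characterization, a strict sign in the displayed identity forces $(A_i - B_i)\lambda_{2,i} + \lambda_{4,i} C_i > \lambda_{3,i} C_i \geq 0$, which is impossible when both $\lambda_{2,i} = 0$ and $\lambda_{4,i} = 0$. Complementary slackness on (\ref{convex 2:3}) interprets $\lambda_{2,i} > 0$ as ``the battery of $S$ is empty at $t_i$,'' and complementary slackness on (\ref{convex 2:45}) interprets $\lambda_{4,i} > 0$ as ``the data buffer of $R$ is full at $t_i$.'' The third scenario in the lemma corresponds to the subcase $\lambda_{3,i} > 0$: since this forces $\lambda_{4,i} = 0$, the only way the strict inequality can be satisfied is $\lambda_{2,i} > 0$, which together with complementary slackness on (\ref{convex 2:4}) means that both the battery of $S$ and the data buffer of $R$ are empty at $t_i$.

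The main obstacle is the exchange step in the case $\lambda_{3,i} > 0$. The simultaneous perturbation of $c_{s,i}, c_{s,i+1}, c_{r,i}, c_{r,i+1}$ must preserve the relay energy causality (\ref{convex 2:2}) and keep every downstream buffer-full constraint (\ref{convex 2:45}) feasible; since the perturbation leaves total source and total relay energy consumption unchanged outside the affected epochs and only advances data in time, no downstream constraint is newly tightened, but carefully verifying this bookkeeping across all indices $k > i+1$ is the most delicate portion of the argument.
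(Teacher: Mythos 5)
Your overall architecture is the paper's: read the easy cases off (\ref{leq 2}) via complementary slackness, and fall back on a primal exchange for the one configuration ($\lambda_{3,i}>0$, hence $\lambda_{4,i}=0$) in which the KKT algebra alone would permit, or even suggest, a decrease. Your telescoping identity is correct and packages the paper's six-way case enumeration more cleanly, and your reading of the strict-increase characterization (strict increase forces $\lambda_{2,i}>0$ or $\lambda_{4,i}>0$; if moreover $\lambda_{3,i}>0$ then necessarily $\lambda_{2,i}>0$, giving the ``both empty'' clause) matches the paper exactly. One small repair: Lemma~\ref{lemma 3} gives tightness of (\ref{convex 2:3}) at $i=K$, which does not by itself imply $\lambda_{2,K}>0$ --- complementary slackness does not run in that direction --- so the positivity of the denominator $\sum_{j=i}^{K}\lambda_{2,j}$ in (\ref{leq 2}) needs a separate justification (or the degenerate case must be handled on its own), just as it implicitly does in the paper.

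The genuine gap is the direction of your exchange. You propose to ``shift a small $\epsilon$ of source transmission backward from epoch $i+1$ to epoch $i$.'' Under the hypothesis to be contradicted, $p_{s,i}^{*}>p_{s,i+1}^{*}$, so moving source energy (or data) into epoch $i$ pushes the two powers further apart, and by concavity of $\log(1+\alpha_{sr}p)$ this strictly \emph{decreases} total source data for the same energy (equivalently, $\epsilon$ extra bits in the higher-power epoch cost strictly more energy than they free in epoch $i+1$); moreover, in the sub-case $\lambda_{2,i}>0$ the constraint (\ref{convex 2:3}) is tight at $t_i$ and adding source energy to epoch $i$ is infeasible outright. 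The perturbation must run the other way, which is what the paper does: equalize $p_{s,i}$ and $p_{s,i+1}$ by moving source energy \emph{forward} from epoch $i$ to epoch $i+1$, which only relaxes (\ref{convex 2:3}) at index $i$ and gains source data by concavity; since (\ref{convex 2:4}) is tight at $t_i$, simultaneously defer the corresponding amount of relay data past $t_i$ so that data causality is preserved (leaving the objective unchanged); and finally convert the surplus source data into throughput by shortening the source's transmission time and lengthening the relay's, re-equalizing the two hops at a strictly larger delivered total. Your closing paragraph correctly flags this step as the delicate one, but the shift as you describe it cannot be rescued by more careful bookkeeping --- it has to be reversed.
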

\begin{proof}
From the complementary slackness conditions, we can argue that  $\lambda_{2,i}>0$ implies the battery of $S$ must be empty at time $t_i$, $\lambda_{3,i}>0$ implies the data buffer of $R$ must be empty at time $t_i$, and  $\lambda_{4,i}>0$ implies the data buffer of $R$ must be full at time $t_i$. Below we investigate different cases for $\lambda_{2,i}$, $\lambda_{3,i}$ and $\lambda_{4,i}$. Since the data buffer of $R$ cannot be full and empty at the same time, the cases ($\lambda_{2,i}=0$, $\lambda_{3,i}>0$, and $\lambda_{4,i}>0$) and ($\lambda_{2,i}>0$, $\lambda_{3,i}>0$, and $\lambda_{4,i}>0$) never happen. Note that $(\lambda_{2,i}=0,\lambda_{3,i}=0)$ and $(\lambda_{2,i}>0,\lambda_{3,i}=0)$ were studied in \cite[Lemma 5]{oner}; a simpler proof using (\ref{leq 2}) is presented here.
\begin{enumerate}
\item If $\lambda_{2,i}=0$, $\lambda_{3,i}=0$, and $\lambda_{4,i}=0$, $p_{s,i}^*=p_{s,i+1}^*$.
\item For the cases ($\lambda_{2,i}>0$, $\lambda_{3,i}=0$, and $\lambda_{4,i}=0$), ($\lambda_{2,i}=0$, $\lambda_{3,i}=0$, and $\lambda_{4,i}>0$), and ($\lambda_{2,i}>0$, $\lambda_{3,i}=0$, and $\lambda_{4,i}>0$), we have $p_{s,i}^*<p_{s,i+1}^*$.
\item For the cases ($\lambda_{2,i}>0$, $\lambda_{3,i}>0$, and $\lambda_{4,i}=0$), and ($\lambda_{2,i}=0$, $\lambda_{3,i}>0$, and $\lambda_{4,i}=0$), we argue by contradiction that $p_{s,i}^*\leq p_{s,i+1}^*$. Note that $\lambda_{2,i}=0$, $\lambda_{3,i}>0$, and $\lambda_{4,i}=0$ implies $p_{s,i}^*>p_{s,i+1}^*$ by (\ref{leq 2}), hence the argument below also suggests that this case never happens.

Suppose $p_{s,i}^*>p_{s,i+1}^*$. We can then equalize the power levels $p_{s,i}^*$ and $p_{s,i+1}^*$ such that the new transmission durations and power levels are $l_{s,i}'=(l_{s,i}+l_{s,i+1})\frac{l_{s,i}p_{s,i}^*}{l_{s,i}p_{s,i}^*+l_{s,i+1}p_{s,i+1}^*}$, $l_{s,i+1}'=l_{s,i}+l_{s,i+1}-l_{s,i}'$, and $p_{s,i}'=p_{s,i+1}'=\frac{p_{s,i}^*+p_{s,i+1}^*}{2}$. The new policy has the same total consumed energy but $S$ transmits more data due to the concavity of the rate-power function. Since we assume that $p_{s,i}^*>p_{s,i+1}^*$, the new transmission duration of $p_{s,i}'$ must increase, i.e., $l_{s,i}'>l_{s,i}$. For the equalized powers, we can obtain another feasible transmission policy by increasing total transmission duration of $R$ and decreasing transmission duration of $S$ and equalizing the transmitted data. As a result, this leads to a policy with higher throughput than the original one, which is a contradiction. Hence, $p_{s,i}^*\leq p_{s,i+1}^*$.
\end{enumerate}
\end{proof}

\section{Two-hop Communication with Two Parallel Relays}\label{two relay}
In this section, we consider the two parallel relay case as shown in Figure \ref{system:subfig2}. We will formulate an optimization problem which includes all four transmission modes given in Section \ref{achievable rates}. Then, to get insights we will investigate special cases by restricting our attention to select few modes.

For ease of exposition, we consider two data buffers, $B_{r_1}$ and $B_{r_2}$ to which data received by $R_1$ and $R_2$ are stored, respectively. The amount of data stored in buffer $B_{r_1}$ in epoch $i$ consists of $c_{b_{r_1},i}$ bits in the broadcast mode and $c_{sI,i}$ bits in successive relaying phase I. The amount of data removed from $B_{r_1}$ in epoch $i$ consists of $c_{r_{1}m,i}$ bits in the multiple access mode and $c_{r_{1}II,i}$ in successive relaying phase II. Similar arguments for buffer $B_{r_2}$ can also be made. Note that $B_{r_1}$ and $B_{r_2}$ are upper bounded by $B_{max}$ bits.

In order to formulate a convex optimization problem for maximizing the throughput we define auxiliary variables $e_{r_{1}m,i}$ and $e_{r_{2}m,i}$, where $e_{r_{1}m,i}=l_{m,i}p_{r_{1}m,i}$ and $e_{r_{2}m,i}=l_{m,i} p_{r_{2}m,i}$. These correspond to the energies allocated by $R_1$ and $R_2$, respectively, to the multiple access phase in epoch $i$. Based on the above arguments, the throughput maximization problem for the two relay case can be formulated as follows:
\begin{subequations}\label{convex 8}
\begin{eqnarray}\label{convex 8:1}
\hspace{-4.3in}\underset{}{\operatorname{max}} && \hspace{-0.2in}  \sum_{i=1}^{K}{c_{r_{1}II,i}+c_{r_{2}I,i}+c_{r_{1}m,i}+c_{r_{2}m,i}}\\\label{convex 8:8}
\text{s.t.} && \hspace{-0.2in}  c_{r_{1}m,i} \leq l_{m,i} f_{r_{1}m}\left(\frac{e_{r_{1}m,i}}{l_{m,i}}\right), \quad \forall i, \\ \label{convex 8:9}
&&\hspace{-0.2in} c_{r_{2}m,i} \leq l_{m,i} f_{r_{2}m}\left(\frac{e_{r_{2}m,i}}{l_{m,i} }\right), \quad \forall i, \\ \label{convex 8:10}
&&\hspace{-0.2in} c_{r_{1}m,i} +c_{r_{2}m,i} \leq l_{m,i} f_{rm}\left(\frac{e_{r_{1}m,i}}{l_{m,i}},\frac{e_{r_{2}m,i}}{l_{m,i} }\right), \quad \forall i, \\ \label{convex 8:2}
&&\hspace{-0.2in} \sum_{j=1}^{i}{l_{b,j}f_{bc}\left(\frac{c_{b_{r_1},j}}{l_{b,j}},\frac{c_{b_{r_2},j}}{l_{b,j}}\right)}+\frac{l_{I,j}}{\alpha_{sr_1}}\left(e^\frac{c_{sI,j}}{l_{I,j}}-1\right)+ \frac{l_{II,j}}{\alpha_{sr_2}}\left(e^\frac{c_{sII,j}}{l_{II,j}}-1\right) \leq \sum_{j=1}^{i}{E_{s,j}},  \forall i, \\\label{convex 8:3}
&&\hspace{-0.2in} \sum_{j=1}^{i}{\frac{l_{II,j}}{\alpha_{r_1d}}\left(e^\frac{c_{r_{1}II,j}}{l_{II,j}}-1\right)} + e_{r_{1}m,j} \leq \sum_{j=1}^{i}{E_{r_1,j}}, \quad \forall i, \\\label{convex 8:4}
&&\hspace{-0.2in} \sum_{j=1}^{i}{\frac{l_{I,j}}{\alpha_{r_2d}}\left(e^\frac{c_{r_{2}I,j}}{l_{I,j}}-1\right)} +e_{r_{2}m,j}\leq \sum_{j=1}^{i}{E_{r_2,j}}, \quad \forall i, \\\label{convex 8:5}
&&\hspace{-0.2in} \sum_{j=1}^{i}c_{r_{1}II,j}+ c_{r_{1}m,j} \leq \sum_{j=1}^{i}{c_{b_{r_1},j}+c_{sI,j}}, \quad \forall i, \\\label{convex 8:6}
&&\hspace{-0.2in} \sum_{j=1}^{i}{c_{r_{2}I,j}}+ c_{r_{2}m,j} \leq \sum_{j=1}^{i}{c_{sII,j}}+c_{b_{r_2},j}, ~ \forall i, \\\label{convex 8:61}
&&\hspace{-0.2in} \sum_{j=1}^{i}{c_{b_{r_1},j}+c_{sI,j}}  \leq  \sum_{j=1}^{i}c_{r_{1}II,j}+ c_{r_{1}m,j}+B_{max}, \quad \forall i, \\\label{convex 8:121}
&&\hspace{-0.2in} \sum_{j=1}^{i}{c_{sII,j}}+c_{b_{r_2},j} \leq  \sum_{j=1}^{i}{c_{r_{2}I,j}}+ c_{r_{2}m,j}+ B_{max}, ~ \forall i, \\\label{convex 8:12}
&&\hspace{-0.2in} l_{I,i}+l_{II,i}+l_{b,i}+l_{m,i} \leq \tau_{i}, \quad  \forall i,\\ \label{convex 8:13}
&&\hspace{-0.2in} 0\leq c_{b_{r_1},i}, ~ 0\leq c_{b_{r_2},i}, ~ 0\leq c_{sI,i}, ~ 0\leq c_{sII,i}, ~ 0\leq c_{r_{1}II,i}, ~ 0\leq c_{r_{2}I,i}, \quad \forall i,\\ \label{convex 8:14}
&&\hspace{-0.2in} 0\leq c_{r_{1}m,i}, ~ 0\leq c_{r_{2}m,i}, ~0\leq l_{b,i}, ~ 0\leq l_{I,i}, ~ 0\leq l_{II,i}, 0\leq l_{m,i}, \quad \forall i,\\\label{convex 8:15}
&&\hspace{-0.2in} 0\leq e_{r_{1_p}m,i},  ~0\leq e_{r_{2_p}m,i}, \quad \forall i,
\end{eqnarray}
\end{subequations}
Here the maximization is over $c_{b_{r_1},i}$, $c_{b_{r_2},i}$, $c_{sI,i}$, $c_{sII,i}$, $c_{r_{1}II,i}$, $c_{r_{2}I,i}$, $c_{r_{1}m,i}$, $c_{r_{2}m,i}$, $l_{b,i}$, $l_{I,i}$, $ l_{II,i}$, $ l_{m,i}$, $e_{r_{1_p}m,i}$, and $e_{r_{2_p}m,i}$. The constraints in (\ref{convex 8:8})-(\ref{convex 8:10}) correspond to the rate region of the multi-access mode as in (\ref{multi acc 1})-(\ref{multi acc 3}). The constraints in (\ref{convex 8:2})-(\ref{convex 8:4}) are the energy causality constraints at $S$, $R_1$, and $R_2$, respectively. The constraints in (\ref{convex 8:5})-(\ref{convex 8:6}) are the data causality constraints at data buffers $B_{r_1}$ and $B_{r_2}$, respectively. The finite data buffer size constraints at $R_1$ and $R_2$ are given in (\ref{convex 8:61})-(\ref{convex 8:121}), respectively. In addition, due to half-duplex constraints, transmission durations $l_{I,i}$, $l_{II,i}$, $l_{b,i}$, and $l_{m,i}$ must satisfy (\ref{convex 8:12}).

As discussed in Section \ref{achievable rates}, $f_{bc}(c_{b_{r_1},i},c_{b_{r_2},i})$ is convex function of $c_{b_{r_1},i}$ and $c_{b_{r_2},i}$. Therefore, $l_{b,i}f_{bc}\left(\frac{c_{b_{r_1},i}}{l_{b,i}},\frac{c_{b_{r_2},i}}{l_{b,i}}\right)$ is the perspective of a convex function. Furthermore, as discussed in Section \ref{two hop}, $le^{\frac{c}{l}}$ is the perspective of the convex function $e^{c}$. In addition, the functions $f_{r_{1}m}$, $f_{r_{2}m}$, $f_{rm}$, and their perspective functions are concave. Hence the optimization problem in (\ref{convex 8}) is convex, and efficient numerical solutions exist \cite{Boyd}. However, due to the large number of variables involved, it is difficult to get insights from the analytical solutions. Below, we focus some special cases: (i) \emph{multi-hop with spatial reuse} in which there are two transmission modes, successive relaying phases I and II; (ii) \emph{broadcast and multi-hop with spatial reuse} in which we have the broadcast mode as well as successive relaying phases I and II. (iii) \emph{multi-access and multi-hop with spatial reuse} in which we have the multi-access mode in addition to the successive relaying phases.

We first focus on multi-hop with special reuse as it is  known to perform well in a wide range of channel conditions and is capacity achieving in certain cases \cite{Sandhu}. Furthermore, it is simple to implement. However, depending on the energy arrival profile and power gains there can be some unused capacity in the first or the second hops \cite{Khandani}. In such cases, we will observe that  adding  the broadcast  or the multiple access modes enables a more efficient use of the harvested energy.

\subsection{Multi-hop with Spatial Reuse}\label{successive relaying}
Multi-hop with spatial reuse refers to successive uses of phase I and II relaying. Our goal in this subsection is to specialize the general formulation of (\ref{convex 8}) to multihop with spatial reuse to identify some of the optimal transmission policy using KKT optimality conditions. Since  $R_2$ initially has no data to transmit in phase I, without loss of generality, we assume it starts transmission by delivering $\epsilon >0$ amount of dummy information. By keeping $\epsilon$ small and scheduling phases I and II in succession, we can ensure that there is no further loss in the throughput. Then, omitting $\epsilon$ for convenience,  the throughput optimization problem can be formulated by setting $l_{b,i}$, $l_{m,i}$, $e_{r_{1}m,i}$,  $e_{r_{2}m,i}$, $c_{r_{1}m,i}$, $c_{r_{2}m,i}$, $c_{b_{r_1},i}$, and $c_{b_{r_2},i}$ in (\ref{convex 8}) to zero for $i=1,...,K$.

As in the single relay case of Section \ref{two hop}, forming the Lagrangian and equating its derivatives to zero we obtain:
\begin{eqnarray}\label{leq b4}
p_{r_{1}II,i}^* &=& \left[\frac{1-\sum_{j=i}^{K}{\lambda_{7,j}-\sum_{j=i}^{K}\lambda_{9,j}}}{\sum_{j=i}^{K}{\lambda_{5,j}}}-\frac{1}{\alpha_{r_1d}}\right]^+,\\\label{leq b3}
p_{r_{2}I,i}^* &=& \left[\frac{1-\sum_{j=i}^{K}{\lambda_{8,j}-\sum_{j=i}^{K}\lambda_{10,j}}}{\sum_{j=i}^{K}{\lambda_{6,j}}}-\frac{1}{\alpha_{r_2d}}\right]^+,\\\label{leq b1}
p_{sI,i}^* &=& \left[\frac{\sum_{j=i}^{K}{\lambda_{7,j}-\sum_{j=i}^{K}\lambda_{9,j}}}{\sum_{j=i}^{K}{\lambda_{4,j}}}-\frac{1}{\alpha_{sr_1}}\right]^+,\\\label{leq b2}
p_{sII,i}^* &= &  \left[ \frac{\sum_{j=i}^{K}{\lambda_{8,j}-\sum_{j=i}^{K}\lambda_{10,j}}}{\sum_{j=i}^{K}{\lambda_{4,j}}}-\frac{1}{\alpha_{sr_2}}\right]^+,
\end{eqnarray}
where $\lambda_{4,i}$, $\lambda_{5,i}$, $\lambda_{6,i}$, $\lambda_{7,i}$, $\lambda_{8,i}$, $\lambda_{9,i}$, and $\lambda_{10,i}$, $i=1,...,K$ are the Lagrange multipliers for the constraints in (\ref{convex 8:2})-(\ref{convex 8:121}), respectively.
\begin{lemma}\label{lemma 6}
For multihop with spatial reuse whenever the optimal transmission power of a relay $p_{r_{1}II,i}^*$ or $p_{r_{2}I,i}^*$ strictly increases, either the battery or the data buffer of that relay must be empty, and whenever the power of  a relay strictly decreases, the data buffer of that relay must be full at time $t=t_i$.
\end{lemma}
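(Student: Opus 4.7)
The plan is to mimic the KKT-based argument of Lemma \ref{lemma 4}, using the closed-form expressions (\ref{leq b4}) and (\ref{leq b3}) together with complementary slackness to identify which constraints must be tight whenever the optimal relay power changes monotonically between consecutive epochs. The relevant KKT multipliers in the reduced convex program (obtained from (\ref{convex 8}) by zeroing out the broadcast and multi-access variables) are $\lambda_{5,i}$ and $\lambda_{6,i}$ for energy causality at $R_1$ and $R_2$, $\lambda_{7,i}$ and $\lambda_{8,i}$ for data causality (buffer empty), and $\lambda_{9,i}$ and $\lambda_{10,i}$ for the finite buffer constraint (buffer full).

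First I would record the complementary slackness interpretations: $\lambda_{5,i}>0$ makes the energy causality constraint (\ref{convex 8:3}) tight at $t_i$, so the battery of $R_1$ is empty there; $\lambda_{7,i}>0$ and $\lambda_{9,i}>0$ correspond to (\ref{convex 8:5}) and (\ref{convex 8:61}), meaning the data buffer of $R_1$ is empty or full at $t_i$, respectively. The corresponding statements for $R_2$ follow from $\lambda_{6,i}$, $\lambda_{8,i}$, $\lambda_{10,i}$. A useful observation is that the data buffer cannot be simultaneously full and empty, so $\lambda_{7,i}$ and $\lambda_{9,i}$ cannot both be strictly positive in the same epoch, and similarly for $R_2$.

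Next I would analyze (\ref{leq b4}) by introducing $N_i=1-\sum_{j=i}^{K}\lambda_{7,j}+\sum_{j=i}^{K}\lambda_{9,j}$ and $D_i=\sum_{j=i}^{K}\lambda_{5,j}$, so that $N_{i+1}-N_i=\lambda_{7,i}-\lambda_{9,i}$ and $D_{i+1}-D_i=-\lambda_{5,i}$. If $p_{r_{1}II,i}^{*}<p_{r_{1}II,i+1}^{*}$ with both powers positive (so the $[\cdot]^+$ is inactive), the ratio $N/D$ strictly increases; were $\lambda_{5,i}=\lambda_{7,i}=0$, the denominator would be unchanged and the numerator could only weakly decrease, a contradiction. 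Hence at least one of $\lambda_{5,i}$ or $\lambda_{7,i}$ is strictly positive, proving that the battery or the data buffer of $R_1$ is empty at $t_i$. Conversely, if $p_{r_{1}II,i}^{*}>p_{r_{1}II,i+1}^{*}$, then $\lambda_{9,i}=0$ would give $N_{i+1}\geq N_i$ and $D_{i+1}\leq D_i$, again forcing $N/D$ to weakly increase; therefore $\lambda_{9,i}>0$, which means the data buffer of $R_1$ is full at $t_i$. The argument for $p_{r_{2}I,i}^{*}$ is the same after swapping $R_1$ and $R_2$ indices and using (\ref{leq b3}).

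The main obstacle is the boundary case in which the $[\cdot]^+$ operator is active (some optimal power equals zero) or a relay transmission duration vanishes at the transition, since additional non-negativity multipliers then enter the stationarity equations and can obscure the sign changes described above. I would handle this by invoking Lemma \ref{lemma 2} to restrict attention to policies where both relays are always on, ensuring that the phase I and II durations are strictly positive and partition the time axis, and by treating separately the transition where the power jumps from zero to positive; in that regime a direct inspection of the active set shows that the same constraint identifications still apply.
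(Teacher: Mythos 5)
Your proposal is correct and follows essentially the same route as the paper: the paper's proof also reads off from (\ref{leq b4}) and (\ref{leq b3}) that a strict increase forces $\lambda_{5,i}>0$ or $\lambda_{7,i}>0$ and a strict decrease forces $\lambda_{9,i}>0$, then invokes complementary slackness to translate these into battery/buffer states. Your explicit numerator--denominator monotonicity argument and the treatment of the $[\cdot]^+$ boundary case merely make precise what the paper leaves implicit.
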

\begin{proof}
The proof is similar to that of Lemma~\ref{lemma 4}. We only prove for $R_1$, similar arguments can be made for $R_2$ as well. From (\ref{leq b4}), we have $p_{r_{1}II,i+1}^*-p_{r_{1}II,i}^*>0$, when either $\lambda_{7,i}>0$ or $\lambda_{5,i}> 0$. Using complementary slackness conditions, we know that  $\lambda_{7,i}>0$ implies all the data in $R_1$ must be delivered at the end of the epoch $i$, that is, the data buffer of $R_1$ is empty. Similarly, whenever $\lambda_{5,i}>0$, the battery of $R_1$  must be depleted at the end of the epoch $i$. In addition, we have $p_{r_{1}II,i+1}^*-p_{r_{1}II,i}^*<0$,  when $\lambda_{9,i}>0$. From the complementary slackness conditions, we can argue that whenever $\lambda_{9,i}>0$, the data buffer of $R_1$ must be full at time $t=t_i$. Hence, the lemma must be true.
\end{proof}

\begin{lemma}\label{lemma 8}
For multihop with spatial reuse whenever the optimal transmission power of $S$ in phase I (phase II) strictly increases from one epoch to the next, i.e. $p_{sI,i}^*<p_{sI,i+1}^*$ ($p_{sII,i}^*<p_{sII,i+1}^*$), either the battery of $S$ must be empty, or the data buffer of $R_1$ ($R_2$) must be full at $t=t_i$, and whenever it decreases, i.e. $p_{sI,i}^*>p_{sI,i+1}^*$ ($p_{sII,i}^*>p_{sII,i+1}^*$), the data buffer of $R_1$ ($R_2$) must be empty at $t=t_i$.
\end{lemma}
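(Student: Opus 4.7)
The plan is to adapt the KKT-based approach used in Lemmas~\ref{lemma 4}--\ref{lemma 6} to the closed-form source-power expressions (\ref{leq b1}) and (\ref{leq b2}). By symmetry it suffices to argue the phase~I statement; the phase~II claim then follows by swapping the roles of $R_1$ and $R_2$ and replacing $(\lambda_{7,i},\lambda_{9,i})$ with $(\lambda_{8,i},\lambda_{10,i})$.

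First I would record the physical interpretation of the multipliers appearing in (\ref{leq b1}) via complementary slackness. For multihop with spatial reuse, $c_{r_{1}m,j}=c_{b_{r_1},j}=0$, so (\ref{convex 8:5}) and (\ref{convex 8:61}) specialize to $\sum_{j=1}^{i}c_{r_{1}II,j}\le\sum_{j=1}^{i}c_{sI,j}$ and $\sum_{j=1}^{i}c_{sI,j}\le\sum_{j=1}^{i}c_{r_{1}II,j}+B_{max}$. Consequently $\lambda_{4,i}>0$ forces the source battery to be depleted at $t_i$, $\lambda_{7,i}>0$ forces $B_{r_1}$ to be empty at $t_i$, and $\lambda_{9,i}>0$ forces $B_{r_1}$ to be full at $t_i$; the latter two cannot hold simultaneously whenever $B_{max}>0$.

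Next I would compute the sign of the one-step change. Writing $N_i\triangleq\sum_{j=i}^{K}(\lambda_{7,j}-\lambda_{9,j})$ and $D_i\triangleq\sum_{j=i}^{K}\lambda_{4,j}$, and working in the interior where the $[\cdot]^+$ clipping in (\ref{leq b1}) is inactive at both epochs, a short algebraic manipulation using $N_{i+1}=N_i-(\lambda_{7,i}-\lambda_{9,i})$ and $D_{i+1}=D_i-\lambda_{4,i}$ yields
\begin{equation*}
p_{sI,i+1}^{*}-p_{sI,i}^{*} \;=\; \frac{D_i(\lambda_{9,i}-\lambda_{7,i})+N_i\,\lambda_{4,i}}{D_i\,D_{i+1}},
\end{equation*}
with $N_i>0$ (required for $p_{sI,i}^{*}>0$) and $D_i,D_{i+1}>0$. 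A strict increase forces the numerator to be positive, which, since $\lambda_{7,i}\lambda_{9,i}=0$, requires $\lambda_{4,i}>0$ or $\lambda_{9,i}>0$; by the slackness readings above this means that the source battery is empty or that $B_{r_1}$ is full at $t_i$. A strict decrease forces the numerator to be negative, and since $N_i\lambda_{4,i}\ge 0$ this requires $\lambda_{7,i}>\lambda_{9,i}\ge 0$, hence $\lambda_{7,i}>0$, meaning $B_{r_1}$ is empty at $t_i$.

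The main technical point, though not a deep obstacle, will be the boundary cases in which one of $p_{sI,i}^{*},p_{sI,i+1}^{*}$ is clipped to zero by the $[\cdot]^+$ operator. I would handle these by repeating the sign analysis on the unclipped quantity $\alpha_{sr_1}N_i-D_i$ and checking that the inequalities needed to drive the clipped value across the boundary still force at least one of $\lambda_{4,i},\lambda_{7,i},\lambda_{9,i}$ to carry the required sign, so the same slackness conclusions apply. The identical argument applied to (\ref{leq b2}), with $(\lambda_{7,i},\lambda_{9,i})$ replaced by $(\lambda_{8,i},\lambda_{10,i})$ and $B_{r_1}$ replaced by $B_{r_2}$ via the slackness of (\ref{convex 8:6}) and (\ref{convex 8:121}), delivers the phase~II statement and completes the proof.
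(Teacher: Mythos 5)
Your proposal is correct and follows essentially the same route as the paper's proof: reading off the complementary-slackness meaning of $\lambda_{4,i}$, $\lambda_{7,i}$, $\lambda_{9,i}$ and deducing the sign of $p_{sI,i+1}^*-p_{sI,i}^*$ from (\ref{leq b1}), with phase II handled symmetrically. The only difference is that you make explicit the one-step difference formula and the $[\cdot]^+$ boundary cases, which the paper leaves implicit.
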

\begin{proof}
From the complementary slackness conditions, we can argue that whenever $\lambda_{4,i}>0$, the battery of $S$ is empty at $t=t_i$, and whenever $\lambda_{9,i}>0$, the data buffer of $R_1$ must be full at time $t=t_i$.  In addition, whenever $\lambda_{7,i}>0$, the data buffer of $R_1$ is empty at  $t=t_i$. From (\ref{leq b1}), we see that $p_{sI,i}^*<p_{sI,i+1}^*$ implies $\lambda_{4,i}>0$ or $\lambda_{9,i}>0$ and hence the battery of $S$ is empty or the data buffer of $R_1$ is full. Similarly, $p_{sI,i}^*>p_{sI,i+1}^*$ implies $\lambda_{7,i}>0$ and hence the data buffer of $R_1$ is empty. The same argument can be made for phase II and $R_2$ as well.
\end{proof}

Lemma \ref{lemma 6} suggests that the structure of the optimal relay transmission power for the two relay case when multihop with spatial reuse is employed is similar to that of a single relay established in Lemma \ref{lemma 4}. However, comparing Lemma \ref{lemma 8} with Lemma \ref{lemma 5}, we observe that unlike the single relay case where the source power is non-decreasing, in the two relay scenario, the transmission power of the source may decrease when the data buffer of the respective relay is empty.

For the single relay case, as argued in Lemma \ref{lemma 3} batteries of both $S$ and $R$ are depleted by the deadline. This is accomplished by adjusting transmission durations and powers of $S$ and $R$ to equalize the two-hop rates until both batteries are depleted. However, for the case of multi-hop with spatial reuse, simultaneously adjusting the transmission durations of $S$, $R_1$ and $R_2$  to deplete all the batteries may not be possible. Depending on energy profiles at the nodes, the maximum total rate transmitted from $S$ to $R_1$ and $R_2$ can sometimes be more than the total rate $R_1$ and $R_2$ can deliver to $D$, resulting in excess energy at $S$ at $t=T$. Similarly, there may be remaining energy at $R_1$ and/or at $R_2$ at $t=T$. The following lemma discusses this excess energy case.

\begin{lemma}\label{lemma 7}
In the optimal transmission policy for multihop with spatial reuse, if $S$ has positive energy in its battery at $t=T$, then the batteries of both $R_1$ and $R_2$ must be empty.
\end{lemma}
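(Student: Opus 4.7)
I would argue by contradiction. Assume at $t=T$ the source retains $\Delta_S>0$ of energy and, without loss of generality, $R_1$ retains $\Delta_{R_1}>0$ (the case of $R_2$ is symmetric). Complementary slackness applied to the energy causality constraints (\ref{convex 8:2}) and (\ref{convex 8:3}) at $i=K$ then forces $\lambda_{4,K}=\lambda_{5,K}=0$. The goal is to construct a feasible perturbation that strictly raises the objective (\ref{convex 8:1}), which contradicts optimality of the assumed policy.

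The cleanest case is when both phase~I and phase~II are active in the last epoch ($l_{I,K}>0$ and $l_{II,K}>0$). I would raise $p_{sI,K}$ so that $c_{sI,K}$ grows by a small $\delta>0$ and simultaneously raise $p_{r_1II,K}$ so that $c_{r_1II,K}$ grows by the same $\delta$. For $\delta$ small enough, the extra energies consumed at $S$ and $R_1$ stay below $\Delta_S$ and $\Delta_{R_1}$ respectively, so (\ref{convex 8:2})--(\ref{convex 8:4}) remain satisfied; because $R_1$'s additional inflow $\delta$ equals its additional outflow $\delta$ in the same epoch, the data causality (\ref{convex 8:5}) and buffer (\ref{convex 8:61}) balances at $R_1$ are unchanged for every $i$; the constraints involving $R_2$ are untouched. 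The objective gains $\delta>0$, a contradiction. An equivalent KKT derivation: assuming $c_{sI,K},c_{r_1II,K}>0$, stationarity of the Lagrangian forces $\lambda_{7,K}-\lambda_{9,K}=0$ from $\partial\mathcal{L}/\partial c_{sI,K}=0$ (using $\lambda_{4,K}=0$) and $\lambda_{7,K}-\lambda_{9,K}=1$ from $\partial\mathcal{L}/\partial c_{r_1II,K}=0$ (using $\lambda_{5,K}=0$), which are incompatible.

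The main obstacle is the edge case in which one of $l_{I,K},l_{II,K}$ vanishes; by Lemma~\ref{lemma 2} the other then equals $\tau_K$. If $l_{II,K}=0$, I would migrate a slice $\epsilon>0$ of the half-duplex slot from phase~I to phase~II and choose $p_{r_1II,K}\sim \Delta_{R_1}/\epsilon$, so that the gained throughput $\epsilon\log(1+\alpha_{r_1d}\Delta_{R_1}/\epsilon)$ is of order $\epsilon\log(1/\epsilon)$, dominating the $O(\epsilon)$ loss incurred by shortening $R_2$'s phase-I transmission; if $R_1$'s buffer lacks the required data, I would simultaneously raise $p_{sI,K}$ using $\Delta_S$ to supply it. The sub-case $l_{I,K}=0$ is handled symmetrically. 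In every sub-case the matched increase of inflow and outflow at $R_1$ preserves cumulative balances from the epoch of perturbation onward, so the remaining data causality and buffer constraints continue to hold automatically, completing the contradiction.
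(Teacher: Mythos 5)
Your proposal is correct and follows essentially the same route as the paper: a proof by contradiction in which the last-epoch powers $p_{sI,K}$ and $p_{r_1II,K}$ are raised in tandem to deplete the excess energy at $S$ and $R_1$ while delivering strictly more data. The paper's version is just the three-sentence statement of your "cleanest case"; your added KKT cross-check and the explicit treatment of the degenerate case $l_{I,K}=0$ or $l_{II,K}=0$ (via the $\epsilon\log(1/\epsilon)$ time-slice migration) are extra rigor the paper omits, not a different argument.
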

\begin{proof}
The proof is by contradiction. Without loss of generality, assume that in an optimal policy both $S$ and $R_1$ have positive energy in their batteries at $t=T$. Then, we can increase the total data delivered from $S$ to $R_1$ and from $R_1$ to $D$ by increasing the last transmission powers $p_{sI,K}$ and $p_{r_{1}II,K}$, such that all the energies depleted. This results in a contradiction, hence proving the lemma.
\end{proof}

As argued above and in Lemma \ref{lemma 7}, either $S$, or $R_1$ and/or $R_2$ may have positive energy in their batteries at $t=T$. When there is energy left at either of the relays' batteries, the broadcast mode, used in conjunction with multihop with spatial reuse, helps deliver more data to the relay(s), enabling them to use their excess energy. Similarly, when there is excess energy at $S$ at $t=T$, the multi-access mode allows an increase in the data rate the relays can deliver, thus creating an opportunity for S to use its remaining energy.

\subsection{Broadcast and Multi-hop with Spatial Reuse}\label{BC and Successive relaying mode}
In this section, we consider the broadcast mode and successive relaying (phases I and II) jointly. In this case, $S$ can either broadcast to the relays, or can transmit messages at different times using successive relaying. Similar to Section \ref{successive relaying}, we identify properties of the optimal transmission policies using KKT conditions. The throughput maximization problem can be formulated by setting $l_{m,i}$, $e_{r_{1}m,i}$, $e_{r_{2}m,i}$, $c_{r_{1}m,i}$, and $c_{r_{2}m,i}$ in (\ref{convex 8}) to zero for all $i=1,...,K$.



Formulating the Lagrangian as in Section \ref{successive relaying} with KKT multipliers $\lambda_{4,i}$, $\lambda_{5,i}$, $\lambda_{6,i}$, $\lambda_{7,i}$, $\lambda_{8,i}$, $\lambda_{9,i}$,, and $\lambda_{10,i}$ corresponding to the constraints in (\ref{convex 8:2})-(\ref{convex 8:121}), respectively, we obtain the optimal transmission power of $S$ in the successive relaying modes, $p_{sI,i}^*$ and $p_{sII,i}^*$ as in (\ref{leq b1}) and (\ref{leq b2}), respectively. Similarly, we obtain the optimal transmission powers of $R_1$ and $R_2$ in the successive relaying phase II and I as in (\ref{leq b4}) and (\ref{leq b3}), respectively.

In order to obtain the transmission power of $S$ in the broadcast mode we take the derivative of the Lagrangian with respect to $c_{b_{r_1},i}$ and $c_{b_{r_2},i}$, respectively, and set them to zero.
\begin{eqnarray}
\label{der 12}
\hspace{-0.2in}\frac{\partial \mathcal{L}}{\partial c_{b_{r_1},i}} &\hspace{-0.13in}=&\hspace{-0.15in}-\frac{e^{\frac{c_{b_{r_1},i}+ c_{b_{r_2},i}}{l_{b,i}}}}{\alpha_{sr_1}}\sum_{j=i}^{K}{\lambda_{4,j}}+\sum_{j=i}^{K}{\lambda_{7,j}}-\sum_{j=i}^{K}{\lambda_{9,j}}+\beta_{c_{b_{r_1}},i} =0,\\
\label{der 13}
\hspace{-0.2in}\frac{\partial \mathcal{L}}{\partial c_{b_{r_2},i}} &\hspace{-0.13in}=&\hspace{-0.15in}\left(\left(\frac{1}{\alpha_{sr_1}}-\frac{1}{\alpha_{sr_2}}\right)e^{\frac{c_{b_{r_2},i}}{l_{b,i}}} -\frac{e^{\frac{c_{b_{r_1},i}+ c_{b_{r_2},i}}{l_{b,i}}}}{\alpha_{sr_1}}\right)\sum_{j=i}^{K}{\lambda_{4,j}}+\sum_{j=i}^{K}{\lambda_{8,j}}-\sum_{j=i}^{K}{\lambda_{10,j}}+\beta_{c_{b_{r_2}},i} \hspace{-0.05in}=\hspace{-0.05in}0.
\end{eqnarray}
The KKT multipliers $\beta_{c_{b_{r_1}},i} \geq 0$ and $\beta_{c_{b_{r_2}},i} \geq 0$ are due to non-negativeness of $c_{b_{r_1},i}$ and $c_{b_{r_2},i}$, respectively.

Using (\ref{broadcast_power}), we compute the optimal power of $S$ in the broadcast mode from (\ref{der 12}) and (\ref{der 13}) as
\begin{eqnarray}
\label{der 21}
p_{b,i}^*&=&\frac{\sum_{j=i}^{K}{\lambda_{7,j}-\sum_{j=i}^{K}\lambda_{9,j}}+\beta_{c_{b_{r_1}},i}}{\sum_{j=i}^{K}{\lambda_{4,j}}}-\frac{1}{\alpha_{sr_2}}+\left(\frac{1}{\alpha_{sr_2}}-\frac{1}{\alpha_{sr_1}}\right)e^{\frac{c_{b_{r_2},i}}{l_{b,i}}},\\\label{der 22}
&=&\frac{\sum_{j=i}^{K}{\lambda_{8,j}-\sum_{j=i}^{K}\lambda_{10,j}}+\beta_{c_{b_{r_2}},i}}{\sum_{j=i}^{K}{\lambda_{4,j}}}-\frac{1}{\alpha_{sr_2}}.
\end{eqnarray}

Without loss of generality, we can restrict the optimal transmission policy such that the broadcast mode occurs only when $S$ transmits to both relays. This is because if in the broadcast mode the source only transmits to one of the relays, say $R_1$, then this means $R_2$ will not be {\em on}. Using Lemma \ref{lemma 2}, we can replace this with another strategy for which $R_2$ transmits to the destination while $S$ transmits to $R_1$, thus adding to the duration of successive relaying phase I. Therefore, we have $\eta_i >0$, where $\eta_i$ is the power sharing parameter in the broadcast mode as in (\ref{broadcast_power}).

\begin{lemma}\label{lemma 11}
For broadcast and multihop with spatial reuse, whenever the optimal transmission power of $S$ in broadcast mode strictly increases from one epoch to the next, i.e. $p_{b,i}^*<p_{b,i+1}^*$, either the battery of $S$ must be empty  or the data buffer of $R_2$ must be full at $t=t_i$,  and whenever it decreases, the data buffer of $R_2$ must be empty at $t=t_i$.
\end{lemma}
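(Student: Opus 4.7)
The plan is to mirror the KKT-based arguments of Lemmas~\ref{lemma 6} and~\ref{lemma 8}. Starting from the closed-form expression (\ref{der 22}) for $p_{b,i}^*$, I would read off conditions on the relevant KKT multipliers under which the broadcast power can strictly change between consecutive epochs, and then translate those conditions into statements about $S$'s battery and $R_2$'s buffer via complementary slackness.

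As a preliminary, since the optimal policy has already been restricted to use the broadcast mode only when $c_{b_{r_2},i}>0$, the nonnegativity multiplier $\beta_{c_{b_{r_2}},i}$ vanishes in every epoch in which broadcasting is active. Writing $A_i\triangleq \sum_{j=i}^{K}\lambda_{8,j}$, $B_i\triangleq \sum_{j=i}^{K}\lambda_{10,j}$, and $C_i\triangleq \sum_{j=i}^{K}\lambda_{4,j}>0$, equation (\ref{der 22}) reduces to $p_{b,i}^*+1/\alpha_{sr_2}=(A_i-B_i)/C_i$, and $p_{b,i}^*\geq 0$ forces $A_i-B_i>0$. The complementary slackness dictionary I need is standard: $\lambda_{4,i}>0$ iff the battery of $S$ is depleted at $t_i$; $\lambda_{8,i}>0$ iff the buffer of $R_2$ is empty at $t_i$; $\lambda_{10,i}>0$ iff the buffer of $R_2$ is full at $t_i$. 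Since the buffer cannot be simultaneously empty and full, $\lambda_{8,i}\lambda_{10,i}=0$.

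For the increase case I argue by contrapositive. Suppose $\lambda_{4,i}=0$ and $\lambda_{10,i}=0$. Then $C_{i+1}=C_i$ and $B_{i+1}=B_i$, while $A_{i+1}=A_i-\lambda_{8,i}\leq A_i$, so $(A_{i+1}-B_{i+1})/C_{i+1}\leq(A_i-B_i)/C_i$, giving $p_{b,i+1}^*\leq p_{b,i}^*$ and contradicting $p_{b,i}^*<p_{b,i+1}^*$. Hence at least one of $\lambda_{4,i}$, $\lambda_{10,i}$ is strictly positive, i.e., $S$'s battery is empty or $R_2$'s buffer is full at $t=t_i$.

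For the decrease case, $p_{b,i}^*>p_{b,i+1}^*$ reads $(A_i-B_i)/C_i>(A_i-\lambda_{8,i}-B_i+\lambda_{10,i})/(C_i-\lambda_{4,i})$; cross-multiplying (both denominators are positive) and simplifying yields $(A_i-B_i)\lambda_{4,i}<(\lambda_{8,i}-\lambda_{10,i})C_i$. The left side is non-negative and $C_i>0$, so $\lambda_{8,i}>\lambda_{10,i}\geq 0$, hence $\lambda_{8,i}>0$ and the buffer of $R_2$ is empty at $t_i$. The main subtlety, compared with Lemma~\ref{lemma 5}, is that three multipliers interact here: $\lambda_{8,i}$ and $\lambda_{10,i}$ enter the numerator of $p_{b,i}^*$ with opposite signs while $\lambda_{4,i}$ shrinks the denominator, and one must use the sign constraint $A_i>B_i$ (inherited from $p_{b,i}^*\geq 0$) to keep the cross-multiplied inequality going in the correct direction. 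Once the bookkeeping is handled, the rest is a two-line ratio manipulation.
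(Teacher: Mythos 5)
Your proof is correct and follows essentially the same route the paper intends: the paper dispatches this lemma as ``a simple extension of the proof of Lemma \ref{lemma 8},'' i.e., reading off from (\ref{der 22}) which of $\lambda_{4,i}$, $\lambda_{8,i}$, $\lambda_{10,i}$ must be positive for the broadcast power to change, exactly as you do. Your version merely makes explicit the bookkeeping the paper leaves to the reader (the contrapositive for the increase case, and the cross-multiplication using $A_i>B_i$ and positive denominators for the decrease case).
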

\begin{proof}
The proof is a simple extension of the proof of  Lemma \ref{lemma 8}.
\end{proof}

\begin{lemma}\label{lemma 9}
In the optimal transmission policy for broadcast and multihop with spatial reuse, whenever the data rate from $S$ to $R_2$ in the broadcast mode increases, i.e., $c_{b_{r_2},i} \leq c_{b_{r_2},i+1}$, where $c_{b_{r_2},i}$ is given in (\ref{broadcast_power}), either the data buffer of $R_1$ or battery of $S$ must be empty, or the data buffer of $R_2$ must be full at $t=t_i$. Whenever the data rate from $S$ to $R_2$ in the broadcast mode decreases, i.e., $c_{b_{r_2},i} \geq c_{b_{r_2},i+1}$, either the data buffer of $R_2$ must be empty or the data buffer of $R_1$ must be full at $t=t_i$.
\end{lemma}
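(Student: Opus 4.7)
The plan is to extend the KKT-based template used in Lemmas \ref{lemma 5}, \ref{lemma 6} and \ref{lemma 8} to the exponential $e^{c_{b_{r_2},i}/l_{b,i}}$, since the $S\to R_2$ broadcast rate $c_{b_{r_2},i}/l_{b,i}$ is a strictly monotone function of it. First I would combine the two stationarity conditions (\ref{der 12})--(\ref{der 13}) by using (\ref{der 12}) to eliminate the term $e^{(c_{b_{r_1},i}+c_{b_{r_2},i})/l_{b,i}}\sum_{j\ge i}\lambda_{4,j}/\alpha_{sr_1}$ from (\ref{der 13}). Invoking the paragraph preceding the lemma to conclude that $\eta_i>0$ and hence $c_{b_{r_1},i},c_{b_{r_2},i}>0$, complementary slackness on the non-negativity constraints gives $\beta_{c_{b_{r_1}},i}=\beta_{c_{b_{r_2}},i}=0$, yielding
\begin{equation*}
e^{c_{b_{r_2},i}/l_{b,i}}=\frac{\sum_{j=i}^{K}\bigl(\lambda_{8,j}+\lambda_{9,j}-\lambda_{7,j}-\lambda_{10,j}\bigr)}{\bigl(\frac{1}{\alpha_{sr_2}}-\frac{1}{\alpha_{sr_1}}\bigr)\sum_{j=i}^{K}\lambda_{4,j}}\triangleq\frac{N_i}{D_i},
\end{equation*}
where $D_i>0$ because $\alpha_{sr_1}>\alpha_{sr_2}$, and $N_i>0$ because the left-hand side is positive.

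Next I would compare $N_i/D_i$ with $N_{i+1}/D_{i+1}$ using the telescoping identities $N_i-N_{i+1}=\lambda_{8,i}+\lambda_{9,i}-\lambda_{7,i}-\lambda_{10,i}$ and $D_i-D_{i+1}=\bigl(\frac{1}{\alpha_{sr_2}}-\frac{1}{\alpha_{sr_1}}\bigr)\lambda_{4,i}\ge 0$. For the increasing case I would argue by contraposition: if $\lambda_{4,i}=\lambda_{7,i}=\lambda_{10,i}=0$, then $D_{i+1}=D_i$ and $N_i-N_{i+1}=\lambda_{8,i}+\lambda_{9,i}\ge 0$, forcing $N_{i+1}/D_{i+1}\le N_i/D_i$, contradicting the hypothesized strict increase. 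Hence at least one of $\lambda_{4,i},\lambda_{7,i},\lambda_{10,i}$ is strictly positive, and complementary slackness against (\ref{convex 8:2}), (\ref{convex 8:5}) and (\ref{convex 8:121}) respectively forces the battery of $S$ to be empty, the data buffer of $R_1$ to be empty, or the data buffer of $R_2$ to be full at $t_i$. The decreasing case runs symmetrically: if $\lambda_{8,i}=\lambda_{9,i}=0$, then $N_{i+1}-N_i=\lambda_{7,i}+\lambda_{10,i}\ge 0$ while $D_{i+1}\le D_i$, so $N_{i+1}/D_{i+1}\ge N_i/D_i$, a contradiction. Complementary slackness against (\ref{convex 8:6}) and (\ref{convex 8:61}) then gives that the data buffer of $R_2$ is empty or that of $R_1$ is full at $t_i$.

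The main obstacle is the algebraic reduction that produces the clean ratio $N_i/D_i$: the sign of $\frac{1}{\alpha_{sr_2}}-\frac{1}{\alpha_{sr_1}}<0$ must be tracked carefully so that the direction of each telescoping inequality lines up with the non-negativity of the KKT multipliers, and the $\beta$-terms must be argued to vanish via the $\eta_i>0$ discussion. Once the ratio is in hand the remainder is a routine monotonicity-of-ratios argument. A secondary technicality is that $\sum_{j\ge i}\lambda_{4,j}$ must be strictly positive for the ratio to be well defined; this holds at optimality, because otherwise the source energy-causality constraint (\ref{convex 8:2}) would be inactive on the entire tail $j\ge i$ and the source could profitably spend more energy in the remaining successive-relaying or broadcast phases, contradicting optimality.
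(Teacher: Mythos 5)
Your proof is correct and follows essentially the same route as the paper: it combines the two broadcast stationarity conditions (with the $\beta$ multipliers vanishing because $\eta_i>0$) to express $e^{c_{b_{r_2},i}/l_{b,i}}$ as the ratio in (\ref{der 23}), and then reads off the active constraints via complementary slackness; your telescoping contrapositive merely makes explicit the monotonicity step that the paper leaves implicit. One cosmetic slip in your closing remarks: you describe $\frac{1}{\alpha_{sr_2}}-\frac{1}{\alpha_{sr_1}}$ as negative, whereas it is positive under the standing assumption $\alpha_{sr_1}>\alpha_{sr_2}$ --- your actual derivation uses the correct (positive) sign, so the argument is unaffected.
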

\begin{proof}
Combining (\ref{der 21}) and (\ref{der 22}), we obtain
\begin{eqnarray}
\label{der 23}
\frac{\sum_{j=i}^{K}{\lambda_{8,j}+\sum_{j=i}^{K}\lambda_{9,j}}-\sum_{j=i}^{K}{\lambda_{7,j}-\sum_{j=i}^{K}\lambda_{10,j}}}{\sum_{j=i}^{K}{\lambda_{4,j}}}&=&\left(\frac{1}{\alpha_{sr_2}}-\frac{1}{\alpha_{sr_1}}\right)e^{\frac{c_{b_{r_2},i}}{l_{b,i}}}.
\end{eqnarray}
This follows from the fact that $\beta_{c_{b_{r_1}},i} =0$ and $\beta_{c_{b_{r_2}},i} =0$ when $c_{b_{r_1},i}$ and $c_{b_{r_2},i}$ are positive. From complementary slackness conditions, we can argue that when $\lambda_{4,i}>0$, the battery of $S$ is empty at $t=t_i$, when $\lambda_{8,i}>0$ and $\lambda_{7,i}>0$, the data buffer $B_{r_2}$ and $B_{r_1}$ are empty at $t=t_{i}$, respectively, and when $\lambda_{9,i}>0$ and $\lambda_{10,i}>0$, the data buffer $B_{r_1}$ and $B_{r_2}$ are full at $t=t_{i}$, respectively. Hence  to have $c_{b_{r_2},i}< c_{b_{r_2},i+1}$  either the data buffer of $R_1$ or battery of $S$ must be empty, or the data buffer of $R_2$ must be full. Similarly,  to have $c_{b_{r_2},i}> c_{b_{r_2},i+1}$, either the data buffer of $R_2$ must be empty or the data buffer of $R_1$ must be full.
\end{proof}
%


\subsection{Multi-access and Multi-hop with Spatial Reuse}\label{MAC and Successive relaying mode}
In this section, we consider the multi-access mode and successive relaying phases I and II jointly. The throughput maximization problem can be formulated by setting $l_{b,i}$, $c_{b_{r_1},i}$, $c_{b_{r_2},i}$ to zero for $i=1,...,K$ in (\ref{convex 8}).

Formulating the Lagrangian with KKT multipliers $\lambda_{k,i}$, $k=1,...,10$, corresponding to the constraints in (\ref{convex 8:8})-(\ref{convex 8:121}), respectively, for $i=1,...,K$, we obtain the optimal transmission power of $S$ is as in Section \ref{successive relaying}, that is,  $p_{sI,i}^*$ and $p_{sII,i}^*$ are equal to (\ref{leq b1}) and (\ref{leq b2}), respectively. In addition, the optimal transmission powers of $R_1$ and $R_2$ in successive relaying modes, i.e., $p_{r_{1}II,i}^*$ and $p_{r_{2}I,i}^*$, are equal to (\ref{leq b4}) and (\ref{leq b2}), respectively. Accordingly, the properties given in Lemma \ref{lemma 6} and Lemma \ref{lemma 8} also hold in this case.

Next, we obtain the properties of the power allocation in the multi-access mode. Taking derivative of the Lagrangian corresponding to (\ref{convex 8}) with respect to $c_{r_{1}m,i}$, $c_{r_{2}m,i}$, $e_{r_{1}m,i}$, and $e_{r_{2}m,i}$, and setting them to zero we obtain the following.
\begin{eqnarray}
\label{der 14}
\frac{\partial \mathcal{L}}{\partial c_{r_{1}m,i}} &=& 1-\sum_{j=i}^{K}{\lambda_{7,j}}+\sum_{j=i}^{K}{\lambda_{9,j}}-\lambda_{1,i}-\lambda_{3,i} +\beta_{c_{r_{1_p}m},i} = 0,\\
\label{der 15}
\frac{\partial \mathcal{L}}{\partial c_{r_{2}m,i}} &=& 1-\sum_{j=i}^{K}{\lambda_{8,j}}+\sum_{j=i}^{K}{\lambda_{10,j}}-\lambda_{2,i}-\lambda_{3,i} +\beta_{c_{r_{2_p}m},i} = 0,\\
\label{der 18}
\frac{\partial \mathcal{L}}{\partial e_{r_{1}m,i}} &=&-\sum_{j=i}^{K}{\lambda_{5,j}}+
\frac{\lambda_{1,i}l_{m,i}\alpha_{r_1d}}{l_{m,i}+\alpha_{r_1d}e_{r_{1}m,i}}+\frac{\lambda_{3,i}l_{m,i}\alpha_{r_1d}}{l_{m,i}+\alpha_{r_1d}e_{r_{1}m,i}+\alpha_{r_2d}e_{r_{2}m,i}}+\beta_{e_{r_{1}m},i} =0,\\
\label{der 19}
\frac{\partial \mathcal{L}}{\partial e_{r_{2}m,i}} &=&-\sum_{j=i}^{K}{\lambda_{6,j}}+
\frac{\lambda_{2,i}l_{m,i}\alpha_{r_2d}}{l_{m,i}+\alpha_{r_2d}e_{r_{2}m,i}}+\frac{\lambda_{3,i}l_{m,i}\alpha_{r_2d}}{l_{m,i}+\alpha_{r_1d}e_{r_{1}m,i}+\alpha_{r_2d}e_{r_{2}m,i}}+\beta_{e_{r_{2}m},i} =0.
\end{eqnarray}
Here $\beta_{c_{r_{1}m},i}$, $\beta_{c_{r_{2}m},i}$, $\beta_{e_{r_{1}m},i}$, and $\beta_{e_{r_{2}m},i}$ are KKT multipliers due to non-negativeness of $c_{r_{1}m,i}$, $c_{r_{2}m,i}$, $e_{r_{1}m,i}$ and $e_{r_{2}m,i}$, respectively.

Similar to Section IV.B, without loss of generality we can restrict our attention to the cases for which  both $R_1$ and $R_2$ deliver data to $D$ in the multi-access mode. Then $\beta_{c_{r_{1}m},i}$, $\beta_{c_{r_{2}m},i}$, $\beta_{e_{r_{1}m},i}$, and $\beta_{e_{r_{2}m},i}$ in (\ref{der 14})-(\ref{der 19}) are equal to zero. Due to the rate region of multi-access mode defined in constraints in (\ref{convex 8:8})-(\ref{convex 8:10}), the constraint (\ref{convex 8:8}) and/or (\ref{convex 8:9}) can be satisfied with equality, that is, $\lambda_{1,i}>0$ and/or $\lambda_{2,i}>0$.
\begin{itemize}
    \item If $\lambda_{1,i}>0$ and $\lambda_{2,i}=0$, then from (\ref{der 14})-(\ref{der 19}), we obtain
    \begin{eqnarray}\label{eqn 1}
    p_{r_{1}m,i}^*&=&\left[\frac{\sum_{j=i}^{K}{\lambda_{8,j}}+\sum_{j=i}^{K}{\lambda_{9,j}}-\sum_{j=i}^{K}{\lambda_{7,j}}-\sum_{j=i}^{K}{\lambda_{10,j}}}{\sum_{j=i}^{K}{\lambda_{5,j}}-\frac{\alpha_{r_1d}}{\alpha_{r_2d}}\sum_{j=i}^{K}{\lambda_{6,j}}}
    -\frac{1}{\alpha_{r_1d}}\right]^+,
    \end{eqnarray}
     where $\sum_{j=i}^{K}{\lambda_{8,j}}+\sum_{j=i}^{K}{\lambda_{9,j}}>\sum_{j=i}^{K}{\lambda_{7,j}}+\sum_{j=i}^{K}{\lambda_{10,j}}$ and $\sum_{j=i}^{K}{\lambda_{5,j}}>\frac{\alpha_{r_1d}}{\alpha_{r_2d}}\sum_{j=i}^{K}{\lambda_{6,j}}$ since $\lambda_{1,i}>0$ and $\lambda_{2,i}=0$.
     \item  If $\lambda_{1,i}=0$ and $\lambda_{2,i}>0$, then from (\ref{der 14})-(\ref{der 19}), we obtain
     \begin{eqnarray}\label{eqn 3}
    p_{r_{2}m,i}^*&=&\left[\frac{\sum_{j=i}^{K}{\lambda_{7,j}}+\sum_{j=i}^{K}{\lambda_{10,j}}-\sum_{j=i}^{K}{\lambda_{8,j}}-\sum_{j=i}^{K}{\lambda_{9,j}}}{\sum_{j=i}^{K}{\lambda_{6,j}}-\frac{\alpha_{r_2d}}{\alpha_{r_1d}}\sum_{j=i}^{K}{\lambda_{5,j}}}-\frac{1}{\alpha_{r_2d}}\right]^+,
    \end{eqnarray}
     where $\sum_{j=i}^{K}{\lambda_{7,j}}+\sum_{j=i}^{K}{\lambda_{10,j}}>\sum_{j=i}^{K}{\lambda_{8,j}}+\sum_{j=i}^{K}{\lambda_{9,j}}$ and $\sum_{j=i}^{K}{\lambda_{6,j}}>\frac{\alpha_{r_2d}}{\alpha_{r_1d}}\sum_{j=i}^{K}{\lambda_{5,j}}$ since $\lambda_{1,i}=0$ and $\lambda_{2,i}>0$.
     \item Otherwise, we have
   \begin{eqnarray}\label{eqn 4}
    p_{r_{1}m,i}^*&=&\left[\frac{1-\sum_{j=i}^{K}{\lambda_{7,j}}+\sum_{j=i}^{K}{\lambda_{9,j}}}{\sum_{j=i}^{K}{\lambda_{5,j}}}-\frac{\alpha_{r_2d}}{\alpha_{r_1d}}p_{r_{2}m,i}^*-\frac{1}{\alpha_{r_1d}}\right]^+,
    \end{eqnarray}
        \begin{eqnarray}\label{eqn 2}
    p_{r_{2}m,i}^*&=&\left[\frac{1-\sum_{j=i}^{K}{\lambda_{8,j}}+\sum_{j=i}^{K}{\lambda_{10,j}}}{\sum_{j=i}^{K}{\lambda_{6,j}}}-\frac{\alpha_{r_1d}}{\alpha_{r_2d}}p_{r_{1}m,i}^*-\frac{1}{\alpha_{r_2d}}\right]^+.
    \end{eqnarray}
\end{itemize}
Using these, we can identify some properties of the optimal transmission powers of $R_1$ and $R_2$ in the multi-access mode.

\begin{lemma}
In the optimal transmission policy for multi-access and multihop with spatial reuse, the following must be satisfied in the multi-access mode:
\begin{itemize}
\item If the transmission power of $R_1$ ($R_2$) strictly increases from epoch $i$ to $i+1$, i.e., $p_{r_{1}m,i}^*<p_{r_{1}m,i+1}^*$ ($p_{r_{2}m,i}^*<p_{r_{2}m,i+1}^*$), then either the data buffer or the battery of $R_1$ ($R_2$) must be depleted, or the data buffer of $R_2$ ($R_1$) must be full at $t=t_i$.
\item If the transmission powers of both $R_1$ and $R_2$ strictly decrease from epoch $i$ to $i+1$, then  the data buffers of both $R_1$ and $R_2$ must be full at $t=t_i$.
\end{itemize}
\end{lemma}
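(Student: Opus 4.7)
The plan is to adapt the KKT-plus-complementary-slackness strategy of Lemmas \ref{lemma 4}--\ref{lemma 8} to the multi-access optimality expressions (\ref{eqn 1}), (\ref{eqn 3}), (\ref{eqn 4})--(\ref{eqn 2}). Complementary slackness on the constraints (\ref{convex 8:3})--(\ref{convex 8:121}) gives the dictionary I would use throughout: $\lambda_{5,i}>0$ (resp.\ $\lambda_{6,i}>0$) if and only if the battery of $R_1$ (resp.\ $R_2$) is depleted at $t_i$; $\lambda_{7,i}>0$ and $\lambda_{9,i}>0$ if and only if the data buffer of $R_1$ is, respectively, empty and full at $t_i$; and $\lambda_{8,i}>0$, $\lambda_{10,i}>0$ similarly for the data buffer of $R_2$. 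Writing $\Lambda_i^{(k)}:=\sum_{j=i}^{K}\lambda_{k,j}$, each $\Lambda^{(k)}$ is nonincreasing in $i$ and strictly decreases from $i$ to $i+1$ exactly when $\lambda_{k,i}>0$.

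For the first claim I would split according to which of the three regimes (the two corner regimes $\lambda_{1,i}>0,\lambda_{2,i}=0$ and $\lambda_{1,i}=0,\lambda_{2,i}>0$, versus the generic regime with both zero) is active. In a corner regime, the relevant expression (\ref{eqn 1}) or (\ref{eqn 3}) has the same scalar single-link form as in Lemma \ref{lemma 4}: a strict increase of $p_{r_1m,i}^*$ forces either the denominator $\Lambda_i^{(5)}-(\alpha_{r_1d}/\alpha_{r_2d})\Lambda_i^{(6)}$ to strictly decrease, which (since $\Lambda^{(6)}$ is nonincreasing) forces $\lambda_{5,i}>0$, or the numerator $\Lambda_i^{(8)}+\Lambda_i^{(9)}-\Lambda_i^{(7)}-\Lambda_i^{(10)}$ to strictly increase, whose sign pattern forces $\lambda_{7,i}>0$ or $\lambda_{10,i}>0$; the argument for $R_2$ is symmetric via (\ref{eqn 3}). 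In the generic regime, I would multiply (\ref{eqn 4}) by $\alpha_{r_1d}$ and (\ref{eqn 2}) by $\alpha_{r_2d}$ to obtain the decoupled identities
\begin{equation*}
\alpha_{r_1d}p_{r_1m,i}^*+\alpha_{r_2d}p_{r_2m,i}^* \;=\; \alpha_{r_1d}\frac{1-\Lambda_i^{(7)}+\Lambda_i^{(9)}}{\Lambda_i^{(5)}}-1 \;=\; \alpha_{r_2d}\frac{1-\Lambda_i^{(8)}+\Lambda_i^{(10)}}{\Lambda_i^{(6)}}-1,
\end{equation*}
and argue by contrapositive: under $\lambda_{5,i}=\lambda_{7,i}=\lambda_{10,i}=0$, the middle expression is nonincreasing and the right expression is nondecreasing between $i$ and $i+1$, so the weighted sum is locally constant; an isolated increase of $p_{r_1m,i}^*$ would then need a compensating decrease of $p_{r_2m,i}^*$ that is ruled out by the single-link analysis of (\ref{eqn 2}) together with the sum-rate constraint (\ref{convex 8:10}).

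For the second claim, a simultaneous strict decrease of both powers forces the weighted sum $\alpha_{r_1d}p_{r_1m,i}^*+\alpha_{r_2d}p_{r_2m,i}^*$ to strictly decrease. The first decoupled identity then requires $(1-\Lambda_i^{(7)}+\Lambda_i^{(9)})/\Lambda_i^{(5)}$ to strictly decrease; since $1/\Lambda^{(5)}$ is nondecreasing and $-\Lambda^{(7)}$ is nondecreasing, the only avenue left is $\Lambda^{(9)}$ strictly decreasing, i.e., $\lambda_{9,i}>0$, which translates via complementary slackness into the buffer of $R_1$ being full at $t_i$. The symmetric identity from (\ref{eqn 2}) yields $\lambda_{10,i}>0$, so the buffer of $R_2$ is also full at $t_i$, as claimed.

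The main obstacle is the coupling in the generic regime: unlike the scalar single-relay relations of Lemmas \ref{lemma 4}--\ref{lemma 8}, the equations (\ref{eqn 4})--(\ref{eqn 2}) present each optimal power as a function of the other, so individual monotonicity is not directly visible from either identity alone. The decoupling step above sidesteps this by exploiting the fact that the KKT conditions in the generic MAC regime uniquely pin down the weighted sum $\alpha_{r_1d}p_{r_1m,i}^*+\alpha_{r_2d}p_{r_2m,i}^*$; combined with the symmetry between (\ref{eqn 4}) and (\ref{eqn 2}), this lifts each single-relay monotonicity conclusion to a statement about the individual powers, at the cost of a careful bookkeeping of the nondecreasing/nonincreasing directions of the partial sums $\Lambda^{(k)}$.
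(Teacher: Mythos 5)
Your strategy is the paper's own: read the monotonicity of the KKT expressions \eqref{eqn 1}, \eqref{eqn 3}, \eqref{eqn 4}--\eqref{eqn 2} against the fact that the tail sums $\sum_{j\ge i}\lambda_{k,j}$ can only drop when $\lambda_{k,i}>0$, then translate via complementary slackness. The corner-regime cases of the first bullet and the whole of the second bullet are correct; in particular your weighted-sum identity $\alpha_{r_1d}p_{r_1m,i}^*+\alpha_{r_2d}p_{r_2m,i}^*=\alpha_{r_1d}\bigl(1-\sum_{j\ge i}\lambda_{7,j}+\sum_{j\ge i}\lambda_{9,j}\bigr)/\sum_{j\ge i}\lambda_{5,j}-1$ is a clean way to see that a simultaneous strict decrease forces $\lambda_{9,i}>0$ and, symmetrically, $\lambda_{10,i}>0$; the paper reaches the same conclusion by reading \eqref{eqn 4} and \eqref{eqn 2} directly. (One small caveat: complementary slackness gives only the implication ``multiplier positive $\Rightarrow$ constraint tight,'' not the equivalence you state in your dictionary; you only use the correct direction, so this is cosmetic.)

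The gap is in the generic-regime case of the first bullet. Having shown that under $\lambda_{5,i}=\lambda_{7,i}=\lambda_{10,i}=0$ the weighted sum is locally constant, so that $p_{r_1m,i}^*<p_{r_1m,i+1}^*$ forces $p_{r_2m,i}^*>p_{r_2m,i+1}^*$, you assert that this compensating decrease is ``ruled out by the single-link analysis of \eqref{eqn 2} together with the sum-rate constraint \eqref{convex 8:10}.'' It is not: \eqref{eqn 2} is algebraically the same weighted-sum identity you already used (multiply it by $\alpha_{r_2d}$ and you recover your second decoupled expression), so the only conclusion it yields for a decrease of $p_{r_2m}^*$ is ``$\lambda_{10,i}>0$ or $p_{r_1m}^*$ increases'' --- and the latter is your standing hypothesis, so no contradiction arises. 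The constraint \eqref{convex 8:10} controls only the sum $c_{r_1m,i}+c_{r_2m,i}$ through the same weighted power sum and says nothing about the split between the relays. Indeed, when $\lambda_{1,i}=\lambda_{2,i}=0$ the stationarity conditions \eqref{der 14}--\eqref{der 19} pin down only $\alpha_{r_1d}p_{r_1m,i}^*+\alpha_{r_2d}p_{r_2m,i}^*$, so excluding a compensating redistribution across epochs needs an extra ingredient (e.g., a primal exchange/perturbation argument, or restricting attention to a particular optimal solution). To be fair, the paper's own proof of this case is the chain ``$p_{r_1m}^*$ increases $\Rightarrow$ $p_{r_2m}^*$ decreases (from \eqref{eqn 4}) $\Rightarrow$ $\lambda_{10,i}>0$ or $p_{r_1m}^*$ increases (from \eqref{eqn 2}),'' which loops back to the hypothesis in exactly the same way; you have reproduced that argument, coupling and all, but your stated claim to have sidestepped the coupling via decoupling is not substantiated.
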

\begin{proof}
We can argue that whenever $p_{r_{1}m,i}^*<p_{r_{1}m,i+1}^*$, either $\lambda_{7,i}>0$ or $\lambda_{5,i}>0$, or $\lambda_{10,i}>0$ in (\ref{eqn 1}), or either $\lambda_{7,i}>0$ or $\lambda_{5,i}>0$, or $p_{r_{2}m,i}^*>p_{r_{2}m,i+1}^*$ in (\ref{eqn 4}). Similarly we can argue that whenever $p_{r_{2}m,i}^*>p_{r_{2}m,i+1}^*$, either $\lambda_{7,i}>0$, $\lambda_{10,i}>0$ or $\lambda_{5,i}>0$ in (\ref{eqn 3}), or $p_{r_{1}m,i}^*<p_{r_{1}m,i+1}^*$ or $\lambda_{10,i}>0$ in (\ref{eqn 2}). Therefore, we can conclude that if the transmission power of $R_1$ strictly increases from epoch $i$ to $i+1$, then either the data buffer of $R_1$ ($\lambda_{7,i}>0$) or the battery of $R_1$ ($\lambda_{5,i}>0$) must be depleted, or the data buffer of $R_2$ ($\lambda_{7,i}>0$) must be full at the end of epoch $i$. Similarly, the proof can be extended for $R_2$ as well.

Now suppose that the transmission powers of the both $R_1$ and $R_2$ strictly decrease from epoch $i$ to $i+1$, i.e., $p_{r_1m,i}^*>p_{r_1m,i+1}^*$ and $p_{r_{2}m,i}^*>p_{r_{2}m,i+1}^*$. Then, from (\ref{eqn 4}), we observe that $\lambda_{9,i}>0$, and from (\ref{eqn 2}) we see that $\lambda_{10,i}>0$. Therefore, from complementary slackness conditions, we can conclude that the data buffers of both $R_1$ and $R_2$ must be full.
\end{proof}

\section{Illustration of Results}\label{numerical}
In this section, we provide numerical results to show the effect of the number of relays, energy harvesting  and relay buffer size on the optimal throughput. We also compare the performances of various transmission modes in the two relay scenario.

\begin{figure}[t]\label{simm1}
\centering
\includegraphics[scale=0.75,trim= 0 0 0 0]{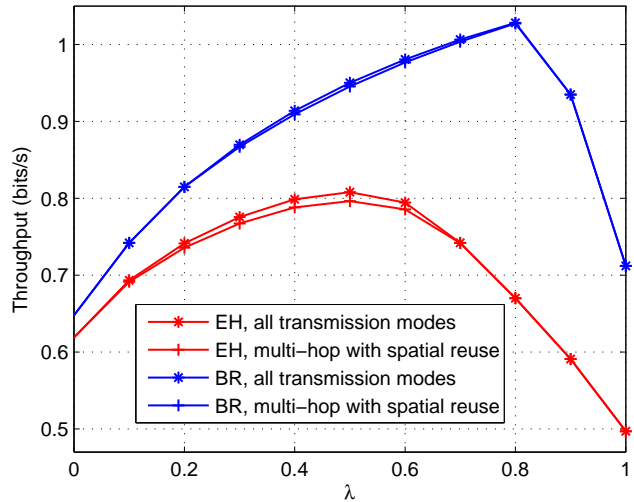}
\caption{Throughput versus $\lambda$, energy allocation among the relays. $R_1$ has total energy $\lambda \mathbf{E_r}$, $R_2$ has $(1-\lambda)\mathbf{E_r}$. $\alpha_{sr_1}=4$, $\alpha_{sr_2}=1$, $\alpha_{r_1d}=1$, and $\alpha_{r_2d}=4$, $T=10$ seconds. For the battery-run (BR) system $E_{s,1}=10$ Joules and $E_r=11.9$ Joules, and for the energy harvesting (EH) system $\mathbf{E_s}=[1,1,1,1,1,1,1,1,1,1]$ Joules and $\mathbf{E_r}=[0.1,0.3,0.3,0.6,0.6,0, 0,1 ,4 ,5]$ Joules with epoch durations $\mathbf{\tau}=[1,0.6,1.4,1.2,0.8,1,1.2,1.6, 0.5,0.7]$ seconds.}
\end{figure}

First, we study the effect of energy harvesting on the throughput of a system with one and with two relays. We consider infinite size data buffer at the relays. We set the power gains to $\alpha_{sr_1}=4$, $\alpha_{sr_2}=1$, $\alpha_{r_1d}=1$, and $\alpha_{r_2d}=4$, and the deadline to $T=10$ seconds. We consider ten epochs with durations $\mathbf{\tau}=[1,0.6,1.4,1.2,0.8,1,1.2,1.6, 0.5,0.7]$ seconds.  We compare the throughputs of the following two scenarios: (i) for each terminal there is a single energy arrival at $t=0$ (\emph{battery-run system}), (ii) for each terminal there are ten energy arrivals at the beginning of the epochs (\emph{energy harvesting system}). For the battery-run system, we have $E_{s,1}=10$ Joules, $E_{r_1,1}=\lambda E_r$, and $E_{r_2,1}=(1-\lambda) E_r $ with $0 \leq \lambda \leq 1$, $E_r=11.9$ Joules, and $E_{s,i}=E_{r_1,i}=E_{r_2,i}=0$, $i=2,...,10$. For the energy harvesting system, source energy arrivals are $\mathbf{E_s}=[1,1,1,1,1,1,1,1,1,1]$ Joules, $R_1$ energies are $\mathbf{E_{r_1}}=\lambda \mathbf{E_r}$, and $R_2$ energies are $\mathbf{E_{r_2}}=(1-\lambda)\mathbf{E_r}$ with $\mathbf{E_r}=[0.1,0.3,0.3,0.6,0.6,0, 0,1 ,4 ,5]$ Joules. Note that in both systems $\lambda = 1$ corresponds to the single relay model with $R_1$ only, and  $\lambda = 0$  with $R_2$ only. Also, the total source and relay energies are same in the battery-run and energy harvesting systems. The throughputs as a function of $\lambda$ for both battery-run and energy harvesting systems are shown in Figure 3. For the case of two relays, we provide the throughputs obtained by optimizing over all four modes, and for multi-hop with spatial reuse only. As expected, the battery-run system with the same total energy performs better than the energy harvesting one. For the channel gains in this particular example, having two relays is always better than having one although this may not be true for arbitrary channel gains due to the energy sharing variable $\lambda$. In addition, for the battery run system,  having only $R_1$ results in more throughput than having only $R_2$, which can be seen by comparing  the throughputs of $\lambda=1$  with $\lambda=0$. This due to the fact that the available energy of $S$ is less than the available energy of the relays; hence, having $\alpha_{sr_1}>\alpha_{r_1d}$ better balances the throughputs in each hop. However, for the energy harvesting system having only $R_1$ results in lower throughput than having only $R_2$. This is because most of the relay energy arrives in the later epochs and hence a higher power gain between the relay and destination is beneficial for the earlier epochs. As shown in the figure, both for the battery run and for the energy harvesting systems, the throughputs when all four modes are considered are slightly higher than the throughputs of multi-hop with spatial reuse  and are equal for large $\lambda$. This is consistent with \cite{Khandani} which shows that multi-hop with spatial reuse obtains most of the capacity gains in many scenarios.

\begin{figure}[t]\label{simm44}
\centering
\includegraphics[scale=0.75,trim= 0 0 0 0]{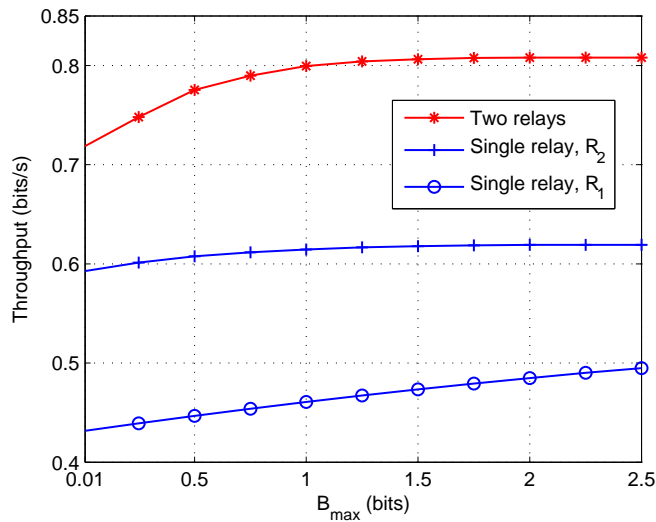}
\caption{ Throughput versus relay data buffer size $B_{max}$. $R_1$ has total energy $\lambda \mathbf{E_r}$, $R_2$ has $(1-\lambda)\mathbf{E_r}$, with $\lambda$ optimized in the two relay case. $\mathbf{E_s}=[1,1,1,1,1,1,1,1,1,1]$ Joules and $\mathbf{E_r}=[0.1,0.3,0.3,0.6,0.6,0, 0,1 ,4 ,5]$ Joules with epoch durations $\mathbf{\tau}=[1,0.6,1.4,1.2,0.8,1,1.2,1.6, 0.5,0.7]$ seconds. $\alpha_{sr_1}=4$, $\alpha_{sr_2}=1$, $\alpha_{r_1d}=1$, and $\alpha_{r_2d}=4$, $T=10$ seconds. }
\end{figure}

We investigate the effect of relay data buffer size on the throughput  in Figure 4.  We consider an energy harvesting system and set the power gains, energy arrivals and epoch durations as  above. We study three cases: (i) Two relays where the throughput is obtained by optimizing over all four transmission modes and energy sharing parameter $\lambda$, (ii)  single relay with  $R_1$ only ($\lambda=1$), (iii)  single relay with $R_2$ only ($\lambda=0$). As shown in the figure, data buffer size is more detrimental for the single relay case with $R_1$ than with $R_2$.  For the case of two relays, for low $B_{max}$, increasing the data buffer capacity leads to a dramatic increase in the throughput. Unlike the scenario with $R_1$ only,  the throughput saturates after $B_{max}>1.75$ bits  when we have two relays since some of the data can be delivered through $R_2$.

\begin{figure}[t]\label{simm2}
\centering
\subfigure[]{
\includegraphics[scale=0.8,trim= 0 0 0 0]{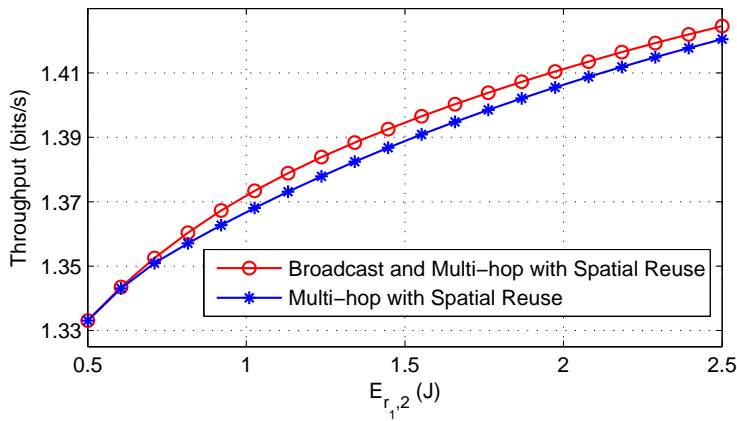}
\label{simm2a}%
}
\subfigure[]{
\includegraphics[scale=0.8,trim= 0 0 0 0]{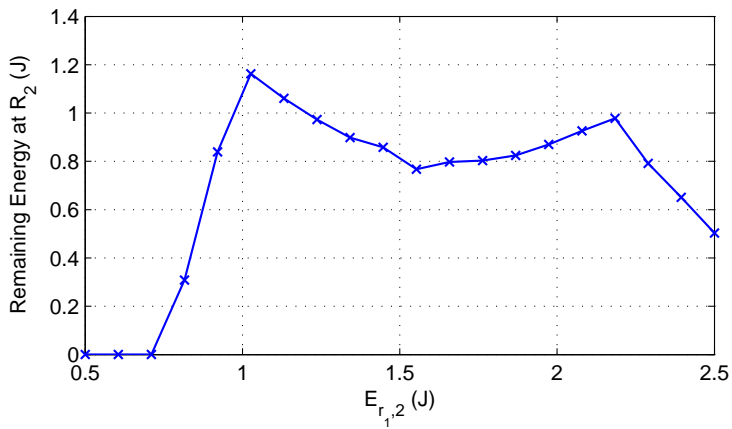}
\label{simm3a}%
}
\caption{Throughput versus relay energy $E_{r_2,2}$. $\mathbf{E_s}=[2.5, 2]$, $\mathbf{E_{r_1}}=[0.5,1.5]$, $\mathbf{E_{r_2}}=[1, E_{r_2,2}]$ Joules where $E_{r_2,2}$ is in the range $(0.5,2.5)$. $\alpha_{sr_1}=2$, $\alpha_{sr_2}=1$, $\alpha_{r_1d}=1$, and $\alpha_{r_2d}=3$, $T=2$ seconds.}
\end{figure}

Next, we compare performances of broadcast and multi-hop with spatial reuse, and multi-hop with spatial reuse only. We set the power gains to $\alpha_{sr_1}=2$, $\alpha_{sr_2}=1$, $\alpha_{r_1d}=1$, and $\alpha_{r_2d}=3$, and the deadline to $T=2$ seconds. We consider an energy harvesting system with two energy arrivals at the beginning of the epochs with durations 1 seconds each. The source energies are $\mathbf{E_s}=[2.5, 2]$ Joules, $R_1$ energies are $\mathbf{E_{r_1}}=[0.5,1.5]$ Joules, and $R_2$ energies are $\mathbf{E_{r_2}}=[1, E_{r_2,2}]$ Joules. Figure 5(a) shows the throughput as a function of $E_{r_2,2}$ which takes values in the range $(0.5,2.5)$ Joules. Figure 5(b) shows the remaining energy at $R_2$ at $T=2$ seconds for multi-hop with spatial reuse. For the above energy and channel profiles the remaining energy at $S$ and $R_1$ are zero. As shown in the figure, when $E_{r_2,2}> 0.72$ Joules, broadcast and multi-hop with spatial reuse performs better than multi-hop with spatial reuse only. This is because for $E_{r_2,2}>0.72$, under multi-hop with spatial reuse protocol, $R_2$ has energy left in its battery at $T=2$ seconds. Introducing the broadcast mode allows the source to send more information to $R_2$, thereby creating an opportunity for $R_2$ to deplete the remaining energy.

\begin{figure}[t]\label{simm3}
\centering
\includegraphics[scale=0.8,trim= 0 0 0 0]{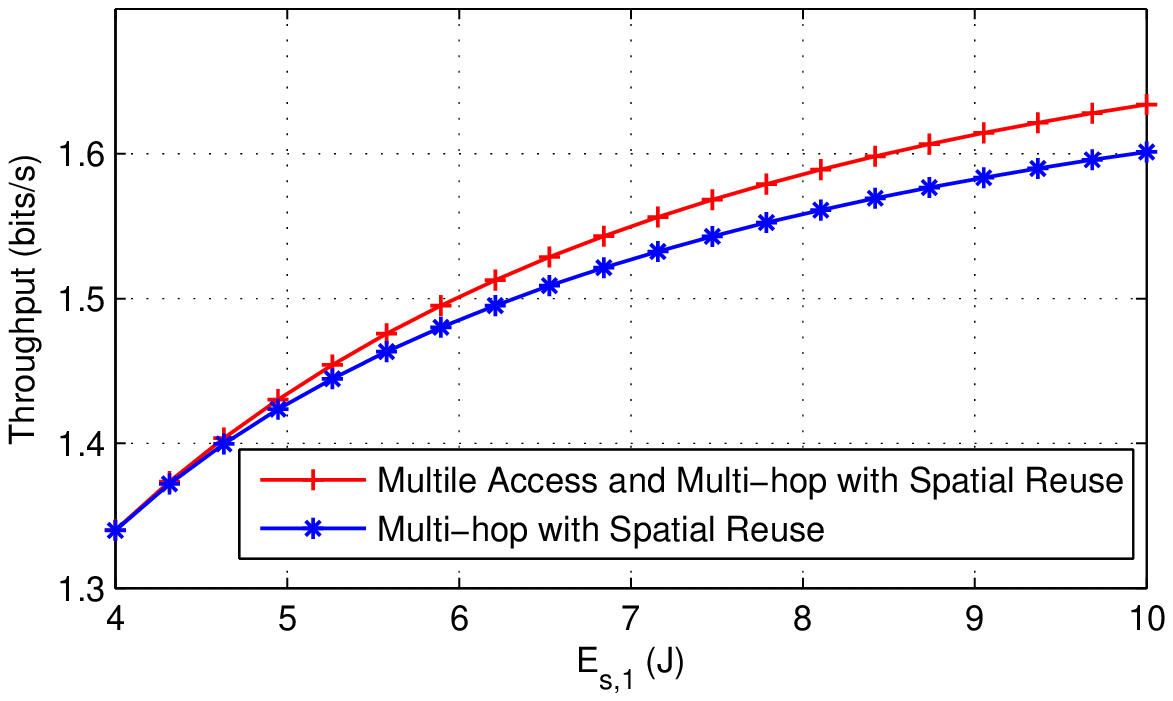}
\caption{Throughput versus source energy $E_{s,1}$. $\mathbf{E_s}=[E_{s,1}, 0]$ where $E_{s,1}$ is in the range $(4,10)$, $\mathbf{E_{r_1}}=[0.01,2]$, $\mathbf{E_{r_2}}=[0.1, 7]$ J. $\alpha_{sr_1}=5$, $\alpha_{sr_2}=1$, $\alpha_{r_1d}=1$, and $\alpha_{r_2d}=3$, $T=2$ seconds.}
\end{figure}

Finally, we compare performances of multi-access and multi-hop with spatial reuse, and multi-hop with spatial reuse schemes. We set the power gains to $\alpha_{sr_1}=5$, $\alpha_{sr_2}=1$, $\alpha_{r_1d}=1$, and $\alpha_{r_2d}=3$, and the deadline to $T=2$ seconds. We consider an energy harvesting system with two energy arrivals at the beginning of epochs of duration 1 second each. The source energies are $\mathbf{E_s}=[E_{s,1}, 0]$ Joules, $R_1$ energies are $\mathbf{E_{r_1}}=[0.01,2]$ Joules, and $R_2$ energies are $\mathbf{E_{r_2}}=[0.1, 7]$ Joules.
In Figure 6, we provide the throughput as a function of $E_{s,1}$ which takes values in the range $(4,10)$ Joules. Note that for the above energy and channel profiles the remaining energy at the nodes are zero for both cases. As shown in the figure, multi-access and multi-hop with spatial reuse performs better than multi-hop with spatial reuse only. This is due to the fact that the multi-access mode makes efficient use of the energy of $R_1$ and $R_2$ to increase the amount of data delivered to the destination. 

\section{Conclusion}\label{conclude}
In this paper, we have studied energy harvesting two hop communication with half-duplex relays. We have considered one and two parallel decode-and-forward relays with finite size data buffers employing four transmission modes. Under the assumption of non-causally  known energy arrivals, we have considered optimal transmission policies to maximize the total data delivered by a deadline, and formulated convex optimization problems to compute the throughput. For the case of two relays we have focused on multi-hop with spatial reuse with and without broadcast or multi-access modes. In all cases we have identified various properties of the optimal policies using KKT conditions of the convex optimization formulation. Finally, we have provided performance comparisons and investigated the impact of multiple relays, relay data buffer size, transmission modes and energy harvesting on the average throughput. Overall, our results suggest that while energy harvesting causes a loss in throughput compared with the battery operated scenario, by proper optimization of the transmission power and schedules, it is possible to obtain significant gains. Furthermore, simple relaying strategies such as multihop with spatial reuse are sufficient to obtain a considerable portion of these gains. Possible future extensions include designing online strategies based on the insights gained from the offline solutions provided here and extensions to larger networks involving more relays and more hops.

\end{document}